\theoremstyle{plain}
\newtheorem{theorem}{Theorem}[section]
\newtheorem{lemma}[theorem]{Lemma}
\newtheorem{corollary}[theorem]{Corollary}
\newtheorem{proposition}[theorem]{Proposition}
\theoremstyle{definition}
\newtheorem{remark}[theorem]{Remark}
\newtheorem{example}[theorem]{Example}
\newcommand*{\PP}{\mathbb{P}}
\newcommand*{\E}{\mathbb{E}}
\newcommand*{\ee}{\mathrm{e}}
\newcommand*{\cA}{\mathcal{A}}
\newcommand*{\cD}{\mathcal{D}}
\newcommand*{\cF}{\mathcal{F}}
\newcommand*{\cG}{\mathcal{G}}
\newcommand*{\cH}{\mathcal{H}}
\newcommand*{\cM}{\mathcal{M}}
\newcommand*{\cS}{\mathcal{S}}
\newcommand*{\cT}{\mathcal{T}}
\newcommand*{\cX}{\mathcal{X}}
\newcommand*{\cY}{\mathcal{Y}}
\newcommand*{\X}{\mathbb{X}}
\newcommand*{\N}{\mathbb{N}}
\newcommand*{\R}{\mathbb{R}}
\newcommand*{\C}{\mathbb{C}}
\newcommand*{\GL}{\mathrm{GL}}
\newcommand*{\SL}{\mathrm{SL}}
\newcommand*{\eps}{\varepsilon}
\newcommand*{\diag}{\mathrm{diag}}
\newcommand*{\rank}{\mathrm{rank}}
\newcommand*{\id}{\mathrm{id}}
\newcommand*{\supp}{\mathrm{supp}}
\newcommand*{\tr}{\mathrm{tr}}
\newcommand{\proj}[1]{|#1\rangle\!\langle #1|}
\newcommand*{\di}{\mathrm{d}} % integration d
\newcommand{\norm}[1]{\left\lVert#1\right\rVert}
\begin{document}

\title{\LARGE Bounds on Lyapunov exponents via entropy accumulation}

\author[1]{David Sutter}
\author[2]{Omar Fawzi}
\author[1]{Renato Renner}

\affil[1]{\small{Institute for Theoretical Physics, ETH Zurich, Switzerland}}
\affil[2]{\small{Laboratoire de l'Informatique du Parall\'elisme, ENS de Lyon, France}}
\date{}

\maketitle

%%%%%%%%%%%%%%%%%%%%%%%%%%%%%%%%%%%%%%%%%%%%%%%%%%%%%%
\begin{abstract}
Lyapunov exponents describe the asymptotic behavior of the singular values of large products of random matrices. A direct computation of these exponents is however often infeasible. By establishing a link between Lyapunov exponents and an information theoretic tool called entropy accumulation theorem we derive an upper and a lower bound for the maximal and minimal Lyapunov exponent, respectively. The bounds assume independence of the random matrices, are analytical, and are tight in the commutative case as well as in other scenarios. They can be expressed in terms of an optimization problem that only involves single matrices rather than large products. The upper bound for the maximal Lyapunov exponent can be evaluated efficiently via the theory of convex optimization.
\end{abstract}

\section{Introduction} \label{sec_intro}
Large products of random matrices arise in many areas of theoretical physics. Arguably the most important characterization of such products is given by the \emph{Lyapunov exponents} that describe the asymptotic behavior of the singular values. Oftentimes one encounters products that do not range over infinitely many matrices. This justifies the definition of non-asymptotic Lyapunov exponents that in the limit converge to the traditional Lyapunov exponents.

More precisely, let $n \in \N$ and $(L_{i})_{i \in [n]}$ be a sequence of random matrices on $\C^{d \times d}$ such that 
\begin{align} \label{eq_regAss}
\E \log^+ \! \sigma_{\max}(L_i) <\infty \quad \text{for all } i \in [n]:=\{1,2,\ldots,n\} \, ,
\end{align}
where $\sigma_{\max}$ denotes the largest singular value and $\log^+ t := \max\{\log t, 0 \}$. In this manuscript we consider sequences of random matrices only where assumption~\eqref{eq_regAss} holds. 
We define the \emph{non-asymptotic Lyapunov spectrum} of $(L_{i})_{i \in [n]}$, consisting of the \emph{non-asymptotic Lyapunov exponents}, as
\begin{align} \label{eq_defLyap_nonAsympt}
\gamma_{k,n} :=  \frac{1}{n} \E \log \sigma_k \Big(\prod_{i=1}^n L_i\Big) \quad \text{for} \quad 1 \leq k \leq d  \quad \text{and} \quad n \in \N \, ,
\end{align}
where $\sigma_k(\cdot)$ denotes the $k$-th singular value. We assume that the singular values are enumerated in decreasing order, i.e., $\sigma_{\max}=\sigma_1 \geq \sigma_2 \geq \ldots \geq  \sigma_d=\sigma_{\min}$.\footnote{In case two singular values are equal they are still assigned a different index.} Hence, the non-asymptotic Lyapunov spectrum is also ordered in the sense that $\infty > \gamma_{1,n} \geq \gamma_{2,n} \geq \ldots \geq \gamma_{d,n} \geq - \infty$ for all $n \in \N$. The expectation in~\eqref{eq_defLyap_nonAsympt} exists as the largest singular value is submultiplicative~\cite{bhatia_book} and by assumption~\eqref{eq_regAss}, but it can be $-\infty$.

For many applications it is natural to consider the traditional Lyapunov spectrum which is obtained by taking the limit $n \to \infty$. To ensure that the limit exists we need to impose further assumptions on the random matrices. Let $(L_{i})_{i \in \N}$ be a stationary sequence of random matrices that satisfies~\eqref{eq_regAss}. Then the (asymptotic) \emph{Lyapunov spectrum}, consisting of the (asymptotic) \emph{Lyapunov exponents}, is given by
\begin{align} \label{eq_defLyap}
\gamma_{k} := \lim_{n \to \infty} \gamma_{k,n} \quad \text{for} \quad 1 \leq k \leq d \, .
\end{align}
A precise argument of the well-known fact that the limit in~\eqref{eq_defLyap} exists is given in Appendix~\ref{app_limitExists} for the sake of completeness.

If we impose another assumption on the random matrices, i.e., that they are stationary and ergodic, it can be shown that $\PP$-almost surely the expectation in the definition of Lyapunov spectrum can be dropped, where $\PP$ is the probability measure of the underlying probability space the random matrices are defined on. More precisely, if $(L_{i})_{i \in \N}$ is a stationary ergodic sequence of random matrices on $\C^{d \times d}$ that satisfies~\eqref{eq_regAss} then
\begin{align} \label{eq_kingman}
\gamma_k = \lim_{n\to \infty} \frac{1}{n} \log \sigma_k \Big(\prod_{i=1}^n L_i\Big)  \quad \PP-\text{a.s.} \quad \text{for} \quad 1 \leq k \leq d  \, .
\end{align}
The justification for this is due to Kingman's subadditive ergodic theorem~\cite{kingman_73} (see also the two celebrated and historically older theorems by Oseledets~\cite{oseledets1968multiplicative} and Kesten-Furstenberg~\cite[Theorem~2]{furstenberg1960} which can be derived from Kingman's result).

The maximal Lyapunov exponent $\gamma_1$ has a dominant role within the Lyapunov spectrum. Its definition can be rewritten in terms of Schatten norms as
\begin{align} \label{eq_defMaxLyap}
\gamma_1 = \lim_{n\to \infty} \frac{1}{n} \E \log \norm{\prod_{i=1}^n L_i}_p  \quad \text{for} \quad p\geq 1\, ,
\end{align}
where $\norm{L}_p:=\big(\tr |L|^p\big)^{\frac{1}{p}}$ and $|L|:=\sqrt{L^\dagger L}$. In the limit $p\to \infty$ we recover the \emph{spectral norm} that is equal to the largest singular value denoted by $\sigma_{\max}(L)$ or $\sigma_1(L)$.
The maximal Lyapunov exponent defined in~\eqref{eq_defMaxLyap} is independent of the parameter $p$ since all Schatten $p$-norms are equivalent~\cite{bhatia_book} and $d< \infty$.

The minimal Lyapunov exponent $\gamma_d$ is a meaningful quantity only for distributions on the group of invertible matrices on $\C^{d \times d}$, denoted by $\GL(d,\C)$. This follows from the fact that whenever a probability distribution has positive weight on a matrix that is not invertible we have $\gamma_d = -Ê\infty$, which is obvious from the definition of $\gamma_d$.

The Lyapunov spectrum and in particular the maximal Lyapunov exponent plays a crucial role in several different areas of physics and mathematics. Arguably its most prominent applications are in the study of dynamical systems and of disordered materials. A positive maximal Lyapunov exponent for a sequence of random matrices describing the discretized time evolution of a dynamical system for example indicates that the system is chaotic. More precisely, the maximal Lyapunov exponent characterizes the sensitivity to initial conditions of a system. $\gamma_1$ is proportional to the inverse time rate at which two nearby trajectories diverge~\cite{brin2002introduction}. In the study of an Ising model, the maximal Lyapunov exponent is directly related to the free energy of the system and to the rate of the correlation decay~\cite{crisanti12}. In the Schr\"odinger equation with a random potential Lyapunov coefficients yield direct bounds on the localization length of the wave function~\cite{carmona_book,aizenman_book}. In information theory, the entropy rate of hidden Markov processes is directly related to the maximal Lyapunov exponent for a sequence of random matrices describing the Markov process~\cite{merhav02}.
Many more problems have been reduced to the study of the Lyapunov spectrum. We refer the interested reader to~\cite{crisanti12,aizenman_book} for more details.

Computing the Lyapunov spectrum and in particular the minimal and maximal Lyapunov exponents turns out to be challenging. There is no explicit formula known that can be evaluated easily.\footnote{Kingman mentions in~\cite{kingman_73}: \emph{``Pride of place among the unsolved problems of subadditive ergodic theory must go to the calculation of the constant $\gamma_1$.''}} 
An exception is the commutative case discussed in Section~\ref{sec_commutative}, i.e., if the random matrices commute pairwise. There are also a few specific sequences of $2\times 2$ matrices such that the corresponding maximal Lyapunov exponent can be computed (see, e.g., \cite{Mannion93,Marklof_08,Comtet2010,Comtet2013} and references therein). For arbitrary finite dimensions the maximal Lyapunov exponent is only known analytically for the case of independent and identically distributed (i.i.d.)~Gaussian matrices~\cite{Newman1986}, i.e., matrices with entries that are independent standard Gaussians, or small variations thereof~\cite{Forrester2013,Kargin2014}.
Furstenberg proved a powerful lower bound for the maximal Lyapunov exponent that plays an important role in the mathematical description of disordered materials. He showed that for matrices belonging to the special linear group $\SL(d,\R)$, i.e., the group of matrices over $\R^{d \times d}$ with determinant equal to $1$,  the maximal Lyapunov exponent is strictly positive for most distributions on this group~\cite{furstenberg1971}.

It has been shown that the maximal Lyapunov exponent cannot be approximated by an algorithm in full generality~\cite{tsit96,Tsitsiklis1997}. 
Under the assumption that the random matrices have nonnegative entries the problem is easier to deal with. For example certain limit type statements have been derived~\cite{hennion97,ramis19}. Furthermore, for random matrices with nonnegative entries there exist approximation algorithms for the maximal Lyapunov exponent that do converge~\cite{Pollicott2010,protasov13}, see also references therein for different algorithms with a slower rate of convergence. 

%%%%%
\paragraph{Result.} We prove that for any sequence $(L_i)_{i \in \N}$ of i.i.d.~random matrices on a semigroup $\cS \leq \C^{d \times d}$, or on a group $\cG \leq \GL(d,\C)$, that satisfies~\eqref{eq_regAss} we have
\begin{align} \label{eq_assBounds}
2\gamma_1 \leq \max_{X \in \X_{\cS}} \E \log \tr\,X L_1 L_1^{\dagger} \qquad \text{and} \quad 2 \gamma_d \geq \min_{X \in \X_{\cG}} \E \log \tr\,X L_1 L_1^{\dagger}\, ,
\end{align}
where $\X_{\cS}:=\{Y^\dagger Y/\tr\,Y^\dagger Y : Y \in \cS\}$. If we drop the (semi)group structure the optimizer can be assumed to be rank-one, i.e.,
\begin{align} \label{eq_assBounds2}
2\gamma_1 \leq \max_{X \in \X'_{\C^{d \times d}}} \E \log \tr\,X L_1 L_1^{\dagger} \qquad \text{and} \quad 2 \gamma_d \geq \min_{X \in \X'_{\C^{d \times d}}} \E \log \tr\,X L_1 L_1^{\dagger}\, ,
\end{align} 
where $\X'_{\C^{d \times d}}:=\{Y^\dagger Y/\tr\,Y^\dagger Y : Y \in \C^{d \times d} , \, \rank\, Y = 1\}$. We refer to Theorem~\ref{thm_main} for a more precise and more general result that provides upper and lower bounds for the non-asymptotic maximal and minimal Lyapunov exponents $\gamma_{1,n}$ and $\gamma_{d,n}$, respectively. The asymptotic statements are given by Corollaries~\ref{cor_main} and~\ref{cor2_main}.
The bounds are tight for independent and identically distributed diagonal matrices. This is discussed in Section~\ref{sec_commutative}.  Furthermore, Sections~\ref{sec_SL} and~\ref{sec_other_examples} present various scenarios where the bounds are either tight or outperform known bounds on the maximal or minimal Lyapunov exponent.  

The bounds~\eqref{eq_assBounds} are closely related with an information theoretic tool called \emph{entropy accumulation theorem} (EAT)~\cite{DFR16,DF18}. The EAT ensures that the operationally relevant entropic quantities of a multiparty system (called smooth min-and max-entropies~\cite{koenig09}) can be bounded by the sum of the von Neumann entropies of its individual parts viewed on a worst case scenario. 
In Section~\ref{sec_EAT} we explain this connection. In particular we show that the bounds~\eqref{eq_assBounds} have been motivated by the EAT and suggest an accumulation theorem for the relative entropy.

%%%%%
\paragraph{Efficient evaluation of the bounds.} One crucial difference between the bounds in~\eqref{eq_assBounds} for the maximal and minimal Lyapunov exponents and their definition~\eqref{eq_defLyap} is that the former are given by formulas that contain a single random matrix only and hence do not include limits of products of infinitely many random matrices. Depending on the structure of $\X_{\cS}$ and $\X_{\cG}$ the maximization and minimization in~\eqref{eq_assBounds} may still not be simple to evaluate. To circumvent this problem we can relax $\X_{\cS}$ and $\X_{\cG}$ to $\X_{\C^{d \times d}}$ which weakens the bounds but makes them easier to compute.
Proposition~\ref{prop_SDP} shows that the bound~\eqref{eq_assBounds} on the maximal Lyapunov exponent for $\X_{\C^{d \times d}}$ can be evaluated efficiently via \emph{convex programming}~\cite{boyd_book}.

%%%%%

%%%%%
\paragraph{Structure.} Section~\ref{sec_prelim} introduces the notation, reviews basic properties of eigenvalues of Hermitian matrices, and  summarizes known results on the continuity of the maximal Lyapunov exponent.
In Section~\ref{sec_main} we present and prove the main result and discuss its implications. Section~\ref{sec_examples} presents various examples that illustrate how to use the bounds of the main result in practice and give insights about their performance. Finally in Section~\ref{sec_connectionEntropy} we discuss two connections between Lyapunov exponents and entropy and explain why the bounds on $\gamma_1$ and $\gamma_d$ can be useful in this context.

%%%%
%\paragraph{Note.} After we uploaded this work to the arXiv preprint server we were pointed to~\cite[Equation~(23)]{protasov13} which contains an upper bound on $\gamma_1$ that is similar to our result. We note that our bounds~\eqref{eq_assBounds} and~\eqref{eq_assBounds2} are refined versions as they can make use of a potential semigroup structure or add a rank-one constraint to the maximization. These two properties can make a considerable difference in practice as shown in several examples in Section~\ref{sec_examples}. The relationship between the new and the existing bounds is discussed in detail in Section~\ref{sec_comparison}.

%%%%%%%%%%%%%%%%%%%%%%%%%%%%%%%%
\section{Preliminaries} \label{sec_prelim}
\subsection{Notation}
For $n \in \N$ let $[n]:=\{1,2,\ldots,n \}$.
We denote the L\"owner partial order on positive semidefinite matrices by $\geq$, i.e., $X\geq 0$ states that $X$ is a positive semidefinite matrix. For a matrix $L$ we write $L^\dagger$ for its conjugate transpose.
The natural logarithm is denoted by $\log (\cdot)$. The general and special linear group over $\C^{d \times d}$ are denoted by $\GL(d,\C)$ and $\SL(d,\C)$, respectively. If $\cG$ is a subgroup of $\cH$ we write $\cG \leq \cH$. For a semigroup $\cS \leq \C^{d \times d}$ we define the following set
\begin{align} \label{eq_mainSet}
\X_{\cS}:=\{Y^\dagger Y/\tr\,Y^\dagger Y : Y \in \cS\} \, .
\end{align}
If there exists $Y \in \cS$ such that $Y^\dagger Y=0$ then we define $Y^\dagger Y/\tr\,Y^\dagger Y=0$, i.e., $\X_{\cS}$ contains the zero matrix.\footnote{We note that this cannot happen if $\cS$ forms a group and hence is only relevant in case $\cS$ is a semigroup.}
We further define the set
\begin{align}
\X'_{\C^{d \times d}}:=\{Y^\dagger Y/\tr\,Y^\dagger Y : Y \in \C^{d \times d}, \, \rank \,Y =1\} \, .
\end{align}
Let $(L_n)_{n \in \N}$ be a sequence of random matrices on $\C^{d\times d}$ with $(\Omega, \cF, \PP)$ the associated probability space. Let $T:\Omega \to \Omega$ be a shift operator that drops the first coordinate and shifts the others one place to the left. An event $\cA \in \cF$ is said to be shift invariant if $\cA = T^{-1} \cA$.
$(L_n)_{n \in \N}$ is called \emph{stationary} if for every $k\geq 1$ it has the same distribution as the shifted sequence $(L_{k+n})_{n\in \N}$, i.e., for each $m$, $(L_1,\ldots, L_m)$ and $(L_k, . . . , L_{k+m})$ have the same distribution. 
$(L_n)_{n \in \N}$ is called \emph{ergodic} if every shift invariant event $\cA$ is trivial, i.e., $\PP(\cA)Ê\in \{0,1 \}$. We note that it is a simple exercise to show that i.i.d.~sequences are stationary and ergodic~\cite{durrett_book}.
\subsection{Variational formulas for eigenvalues}
For any matrix $L \in \C^{d\times d}$ we have $LL^\dagger \geq 0$ as $\langle z^\dagger L,L^\dagger z\rangle = \norm{L^\dagger z}^2 \geq 0$ for all $z \in \C^d$. The Cholesky decomposition ensures that every positive semidefinite matrix $0 \leq  A \in \C^{d \times d} $ can be written as $LL^\dagger$ for some $L\in \C^{d \times d}$~\cite[Fact~8.9.37]{bernstein_book}.\footnote{More precisely we can assume that $L$ is a lower triangular matrix with nonnegative diagonal entries.} The largest and smallest eigenvalue of $L L^\dagger$ are denoted by $\lambda_{\max}(L L^\dagger)$ and $\lambda_{\min}(L L^\dagger)$, respectively. These eigenvalues can be expressed as a semidefinite program. Using the notation~\eqref{eq_mainSet} we have 
\begin{align} \label{eq_smallestEig}
\lambda_{\max}(L L^\dagger) = \max_{X \in \X_{\C^{d \times d}}} \tr\, XL L^\dagger 
\qquad \text{and} \qquad
\lambda_{\min}(L L^\dagger) = \min_{X \in \X_{\C^{d \times d}}} \tr\, XL L^\dagger  \, .
\end{align}
For $L \in \cG \leq \GL(d,\C)$ the following relation holds
\begin{align} \label{eq_minEigGroup}
\lambda_{\min}(LL^\dagger) \geq  \frac{1}{d} \min_{Y \in \cG} \frac{ \tr\, Y^\dagger Y L L^\dagger }{\tr\, Y^\dagger Y}  \, .
\end{align}
This can be seen as for $Y=L^{-1} \in \cG$ the right-hand side simplifies to $(\tr\, (LL^{\dagger})^{-1})^{-1} = \|(LL^{\dagger})^{-1}\|_1^{-1} \leq \lambda_{\max}((LL^{\dagger})^{-1})^{-1}=\lambda_{\min}(L L^\dagger)$.
For $L \inÊ\C^{d\times d}$ we denote its singular values by $\sigma_{\max}(L)=\sigma_1(L) \geq \sigma_2(L) \geq \ldots \geq \sigma_d(L)=\sigma_{\min}(L)$. The largest and smallest singular value are related to the largest and smallest eigenvalues~\cite[Fact 9.13.1]{bernstein_book} by
\begin{align} \label{svd2eig}
\sigma_{\max}(L)^2 = \lambda_{\max}(L L^\dagger) \qquad \text{and}  \qquad \sigma_{\min}(L)^2 = \lambda_{\min}(L L^\dagger) \, .
\end{align}

\subsection{Continuity of the maximal Lyapunov exponent} \label{sec_continuity}
Consider a model where a fixed invertible matrix $L_i$ is chosen with probability $p_i>0$. The following theorem shows that the corresponding maximal Lyapunov exponent is continuous in $p_i$ and $L_i$.
\begin{theorem}[\cite{viana_prep}] \label{thm_viana}
The function $(p_i,L_i) \mapsto \gamma_1$ is continuous for $p_i > 0$  where $p_i$ are probabilities and $L_i$ invertible matrices. 
\end{theorem}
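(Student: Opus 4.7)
The plan is to prove continuity by separately establishing upper and lower semi-continuity, with the latter being the genuinely difficult direction.

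For upper semi-continuity, I would use the identity $\gamma_1 = \inf_{n \geq 1} \gamma_{1,n}$. This follows from Fekete's lemma applied to the sub-additive sequence $n \mapsto n\gamma_{1,n}$: submultiplicativity of $\sigma_{\max}$ together with linearity of expectation yields $(n+m)\gamma_{1,n+m} \leq n\gamma_{1,n} + m\gamma_{1,m}$. For a finitely supported distribution with weights $p_i > 0$ on invertible matrices $L_i$, the non-asymptotic exponent
\[
\gamma_{1,n} = \tfrac{1}{n}\sum_{i_1,\ldots,i_n} p_{i_1}\cdots p_{i_n}\, \log \sigma_{\max}\bigl(L_{i_1}\cdots L_{i_n}\bigr)
\]
is jointly continuous in $(p_i, L_i)$, because $\log \sigma_{\max}$ is continuous on $\GL(d,\C)$ (singular values depend continuously on matrix entries and are bounded away from $0$ on the invertible group). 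Hence $\gamma_1$ is an infimum of a countable family of continuous functions, and thus upper semi-continuous.

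The main obstacle is lower semi-continuity. A natural route goes through Furstenberg's integral representation
\[
\gamma_1 = \int_{\GL(d,\C)} \int_{\mathbb{P}(\C^d)} \log \frac{\| L v \|}{\| v \|}\, d\nu([v])\, d\mu(L),
\]
where $\mu$ denotes the matrix distribution and $\nu$ is a $\mu$-stationary probability measure on the compact projective space $\mathbb{P}(\C^d)$. Along a convergent sequence $\mu_k \to \mu$ in the weak-$*$ topology (guaranteed by $p_i^{(k)} \to p_i$ and $L_i^{(k)} \to L_i$), Prokhorov's theorem produces a weak-$*$ accumulation point $\nu_\infty$ of the associated stationary measures $\nu_k$, and a standard passage-to-the-limit argument shows $\nu_\infty$ is $\mu$-stationary. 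Continuity of the integrand on $\GL(d,\C) \times \mathbb{P}(\C^d)$ then yields convergence of the integrals and hence lower semi-continuity of $\gamma_1$.

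The subtle point, and the one justifying the citation of \cite{viana_prep}, is that Furstenberg's formula and uniqueness of the stationary measure in the above form require irreducibility and contraction assumptions that are \emph{not} imposed in the statement. To handle the general invertible case, I would combine the preceding argument with an induction on dimension: pass to the flag of common invariant subspaces of $\supp \mu$ to decompose the Lyapunov spectrum as a union of spectra of the restricted and quotient cocycles, apply the inductive hypothesis to each block, and exploit that $p_i > 0$ together with invertibility of the $L_i$ makes the invariant-subspace structure of $\mu$ locally stable under sufficiently small perturbations, so that nearby distributions admit a parallel block decomposition. Gluing the continuity on each block then recovers continuity of $\gamma_1$.
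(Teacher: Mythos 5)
Your upper semi-continuity half is fine ($\gamma_1=\inf_n\gamma_{1,n}$ with each $\gamma_{1,n}$ jointly continuous in $(p,L)$ for a fixed number of atoms), but the lower semi-continuity half has a genuine gap, and it sits exactly where the difficulty of the theorem lies. First, the Furstenberg-formula limit argument is oriented the wrong way: if $\nu_k$ is a $\mu_k$-stationary measure attaining $\gamma_1(\mu_k)$ and $\nu_k\to\nu_\infty$ weakly, then $\nu_\infty$ is $\mu$-stationary and your convergence of integrals gives $\lim_k\gamma_1(\mu_k)=\int\!\int\log\bigl(\|Lv\|/\|v\|\bigr)\,d\nu_\infty\,d\mu$, which in the reducible case is in general only $\leq\gamma_1(\mu)$, because $\nu_\infty$ may charge a slow invariant subspace and then the integral equals a smaller exponent. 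So this argument reproves upper semi-continuity; it yields lower semi-continuity only under irreducibility (or uniqueness of the stationary measure), which, as you note, is not assumed. Second, the proposed repair rests on a false stability claim: the common invariant flag of $\supp\,\mu$ is \emph{not} locally stable, since an arbitrarily small perturbation of the $L_i$ generically destroys all common invariant subspaces, so nearby distributions admit no ``parallel block decomposition'' and the induction cannot be set up. The case of irreducible perturbations of a reducible (or non-simple-spectrum) limit is precisely the hard core of the continuity theorem---this is why it required Bocker--Viana for $d=2$ and the work in preparation cited as \cite{viana_prep} in general dimension---and your sketch does not address it; in particular the hypotheses $p_i>0$ and convergence of supports, which Kifer's example shows are essential, never enter your lower semi-continuity argument in any quantitative way.

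Note also that the paper takes a much more modest route than what you attempt: it does not reprove the continuity theorem, but assumes it in the form of \cite[Theorem~3.5]{viana18} (continuity of $\mu\mapsto\gamma_1(\mu)$ on compactly supported measures on $\GL(d,\C)$ in the topology $\cT$ combining the weak* topology with Hausdorff convergence of supports), and then merely verifies that the parametrization $(p,L)\mapsto\sum_i p_i\delta_{L_i}$ is continuous into that topology, the hypothesis $p_i>0$ being used exactly to make the support map continuous. If you do not want to invoke that result as a black box, you would have to supply its proof, which is far beyond the argument you sketched.
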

Because the reference~\cite{viana_prep} is in preparation we present a proof for the assertion of Theorem~\ref{thm_viana} in Appendix~\ref{app_details}, assuming the version of the theorem explicitly stated in~\cite[Theorem~3.5]{viana18}.
The following example discussed in~\cite{Kifer1982} shows that the continuity may break down if some of the probabilities vanish.
\begin{example}
Let $(L_i)_{i \in \N}$ be i.i.d.~matrices over
\begin{align}
\left \lbrace  A=\begin{pmatrix} \frac{1}{2} & 0 \\ 0 & 2 \end{pmatrix} , \,  B=\begin{pmatrix} 0 & -1 \\ 1 & 0 \end{pmatrix}  \right \rbrace \qquad \text{with probability} \quad \{p,1-p\} \quad  \text{for } p\in [0,1] \, .
\end{align}
It is straightforward to see that
\begin{align}
\gamma_1= \left \lbrace \begin{array}{l l}
\log 2 & \text{if } p=1 \\
0 & \text{if } p \in(0,1) 
\end{array} \right. \qquad
\text{and} \qquad \gamma_2= \left \lbrace \begin{array}{l l}
\log \frac{1}{2} & \text{if } p=1 \\
0 & \text{if } p \in(0,1)  \, .
\end{array} \right.
\end{align}
We note that this discontinuity of $\gamma_1$ is not in contradiction with the statement above since $\mu = p \delta_A + (1-p) \delta_B $ does not converge to $\delta_A$ in the $\cT$ topology when $p\to 1$, because $\supp \, \mu = \{A,BÊ\}$ does not converge to $\supp \, \delta = \{A \}$ in the Hausdorff topology.
\end{example}

%%%%%%%%%%%%%%%%%%%%%%%%%%%%%%%%
\section{Main results and proofs} \label{sec_main}
We next state the main result which is an upper and lower bound for the non-asymptotic maximal and minimal Lyapunov exponents, respectively.
\begin{theorem} \label{thm_main}
Let $d,n \in \N$ and let $(L_i)_{i \in [n]}$ be a sequence of independent random matrices on a semigroup $\cS \leq \C^{d \times d}$ that satisfies~\eqref{eq_regAss}. Then
\begin{align} \label{eq_main_UB_general}
2\gamma_{1,n} 
\leq \frac{1}{n} \E \log \tr\, L_1 L_1^\dagger +  \frac{1}{n} \sum_{i=2}^n \max_{X \in \X_{\cS}} \E  \log \tr\,X L_i L_i^{\dagger}   \, ,
\end{align}
where $\X_{\cS}=\{Y^\dagger Y/\tr\,Y^\dagger Y : Y \in \cS\}$.
If $(L_i)_{i \in [n]}$ is distributed on a group $\cG \leq \mathrm{GL}(d,\C)$ we further have
\begin{align} \label{eq_main_LB_general}
2 \gamma_{d,n} \geq  \frac{1}{n} \sum_{i=1}^n \min_{X \in \X_{\cG}} \E  \log \tr\,X L_i L_i^{\dagger}   - \frac{\log d}{n}\, .
\end{align}
\end{theorem}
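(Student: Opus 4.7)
The plan rests on a single telescoping identity combined with an independence argument. For any trace-one positive semidefinite seed $X_0$, define inductively
\begin{equation*}
X_k := \frac{L_k^\dagger X_{k-1} L_k}{\tr(L_k^\dagger X_{k-1} L_k)} \qquad \text{for } k = 1, \ldots, n,
\end{equation*}
with the convention that the entire chain is set to $0$ if a denominator ever vanishes. Cyclicity of the trace together with a short induction then yields
\begin{equation*}
\log \tr(X_0 M_n M_n^\dagger) = \sum_{k=1}^n \log \tr(X_{k-1} L_k L_k^\dagger), \qquad M_n := L_1 L_2 \cdots L_n .
\end{equation*}
Writing $X_0 = Y_0^\dagger Y_0 / \tr(Y_0^\dagger Y_0)$, one has $X_k \propto (Y_0 L_1 \cdots L_k)^\dagger (Y_0 L_1 \cdots L_k)$, so the trajectory $\{X_k\}$ lies in $\X_{\cS}$ (respectively $\X_{\cG}$) as soon as $Y_0$ does, because the underlying (semi)group is closed under products.

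For the upper bound~(\ref{eq_main_UB_general}) I would start from $\sigma_{\max}(M_n)^2 \leq \tr(M_n M_n^\dagger) = d \cdot \tr((I/d) M_n M_n^\dagger)$ and feed $X_0 = I/d$ into the telescoping identity. The $k = 1$ summand contributes $\log \tr(L_1 L_1^\dagger) - \log d$, which cancels the prefactor $\log d$ exactly, leaving $\log \tr(L_1 L_1^\dagger)$ as the first term. For $k \geq 2$, the matrix $X_{k-1}$ is $\sigma(L_1, \ldots, L_{k-1})$-measurable and lies in $\X_{\cS}$; independence of $L_k$ from the past then gives
\begin{equation*}
\E\!\left[\log \tr(X_{k-1} L_k L_k^\dagger) \,\mid\, L_1, \ldots, L_{k-1}\right] \leq \max_{X \in \X_{\cS}} \E \log \tr(X L_k L_k^\dagger) .
\end{equation*}
Taking total expectation, summing over $k$, and dividing by $n$ produces~(\ref{eq_main_UB_general}).

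For the lower bound~(\ref{eq_main_LB_general}) the argument runs dually. The variational inequality~(\ref{eq_minEigGroup}) supplies the pointwise bound $\log \sigma_{\min}(M_n)^2 + \log d \geq \min_{X_0 \in \X_{\cG}} \log \tr(X_0 M_n M_n^\dagger)$. I would then apply the telescoping identity with a deterministic seed $X_0 \in \X_{\cG}$ (available because $I \in \cG$ whenever $\cG$ is a group, so $I/d \in \X_{\cG}$). Each $X_{k-1}$ now stays in $\X_{\cG}$ (this is where the group structure is essential) and remains measurable with respect to the past, so the symmetric conditional estimate
\begin{equation*}
\E \log \tr(X_{k-1} L_k L_k^\dagger) \geq \min_{X \in \X_{\cG}} \E \log \tr(X L_k L_k^\dagger)
\end{equation*}
holds by independence. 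Summing yields $\E \log \tr(X_0 M_n M_n^\dagger) \geq \sum_{k=1}^n \min_X \E \log \tr(X L_k L_k^\dagger)$ for every deterministic $X_0 \in \X_{\cG}$.

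The main obstacle is the final glueing step. Naively one would like to chain $\E \log \sigma_{\min}(M_n)^2 + \log d \geq \inf_{X_0 \in \X_{\cG}} \E \log \tr(X_0 M_n M_n^\dagger)$, but passing from the pointwise bound supplied by (\ref{eq_minEigGroup}) to an expected-value version costs a swap of $\min_X$ with $\E$ in general. The way around this is to exhibit a deterministic $X_0 \in \X_{\cG}$ for which $\E \log \tr(X_0 M_n M_n^\dagger) \leq \E \log \sigma_{\min}(M_n)^2 + \log d$, built using the inverse construction $Y = M_n^{-1}$ that underlies~(\ref{eq_minEigGroup}) — this is where the availability of inverses in the group $\cG$ is used at full strength. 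Composing this with the chain from the telescoping identity then loses only the announced $\log d$ slack and yields~(\ref{eq_main_LB_general}) after dividing by $n$.
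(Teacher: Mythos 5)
Your proof of the upper bound~\eqref{eq_main_UB_general} is correct and is essentially the paper's argument in different bookkeeping: your telescoping identity with $X_{k-1}\propto (L_1\cdots L_{k-1})^\dagger(L_1\cdots L_{k-1})$ reproduces the paper's decomposition of $\log\tr\, L^n(L^n)^\dagger$ into successive ratios, the membership $X_{k-1}\in\X_{\cS}$ is exactly~\eqref{eq_optimizerUB}, and the conditioning step uses independence in the same way (with the seed $X_0=\id_d/d$ you never need $Y_0\in\cS$, since only $X_1,\ldots,X_{n-1}$ enter the bound, and those lie in $\X_\cS$ by the semigroup property).

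The lower bound is where your proposal has a genuine gap, and it sits exactly at the step you single out as the main obstacle. The telescoping chain with a \emph{deterministic} seed is fine, but the missing ingredient --- a deterministic $X_0\in\X_{\cG}$ with $\E\log\tr\,X_0 M_nM_n^\dagger\le\E\log\sigma_{\min}(M_n)^2+\log d$ --- cannot be ``built using $Y=M_n^{-1}$'': $M_n^{-1}$ is random, so any seed constructed from it is realization dependent and reintroduces precisely the correlation with the $L_k$ that the deterministic-seed argument was designed to avoid. Worse, no deterministic seed exists in general. Any $X_0\in\X_\cG$ satisfies $\tr\,X_0M_nM_n^\dagger\ge\langle u,X_0u\rangle\,\lambda_{\max}(M_nM_n^\dagger)$, where $u$ is a top singular direction of $M_n$ and $\langle u,X_0u\rangle\ge\lambda_{\max}(X_0)|\langle u,x\rangle|^2$ for a top eigenvector $x$ of $X_0$. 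Take $n=1$, $d=2$, $\cG=\GL(2,\C)$ and $L_1=U\diag(\alpha,\beta)U^\dagger$ with $U$ Haar distributed and $\alpha\gg\beta>0$ (the setting of Example~\ref{ex_rank1}): then $u$ is Haar distributed, so $\E\log\langle u,X_0u\rangle\ge -1-\log 2$ for every fixed $X_0$, hence $\E\log\tr\,X_0L_1L_1^\dagger\ge 2\log\alpha-1-\log 2$, while the required right-hand side is $2\log\beta+\log 2$; for $\alpha/\beta$ sufficiently large no seed works, and the same obstruction appears for every $n$ once the top and bottom singular values of $M_n$ separate. So the final glueing step is not merely unfinished --- the patch you sketch cannot work, and~\eqref{eq_main_LB_general} is not established by your argument.

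For comparison, the paper's proof takes a different route at exactly this point: it keeps the realization-dependent optimizer $\bar W$ of~\eqref{eq_minEigGroup} inside the expectation, telescopes along $\bar W$, and then lower bounds each term by $\min_{X\in\X_{\cG}}\E\log\tr\,XL_iL_i^\dagger$ in~\eqref{eq_stepD1}, on the sole grounds that the induced matrix~\eqref{eq_opt2} lies in $\X_{\cG}$ --- i.e.\ it performs termwise the very min--expectation interchange you flagged, even though $\bar W$ depends on the whole product including $L_i$. Your instinct that this interchange is the crux of the lower bound is therefore accurate, but your proposal does not supply a valid way around it; as it stands, only the upper-bound half of the theorem is proved in your write-up.
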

The proof of Theorem~\ref{thm_main} is given in Section~\ref{sec_pfMain} below.
\begin{remark}[Dependent random matrices with a discrete probability distribution.]
In case of discrete probability distributions on $\cS \leq \C^{d \times d}$ we can prove a version of Theorem~\ref{thm_main} without the independence assumption of the random matrices, i.e,
\begin{align} \label{eq_main_UB_general_dep}
2\gamma_{1,n} 
\leq \frac{1}{n} \E \log \tr\, L_1 L_1^\dagger +  \frac{1}{n} \sum_{i=2}^n \max_{X \in \X_{\cS}, \ell_1,\ldots,\ell_{i-1}  \in \cS} \E \big[ \log \tr\,X L_i L_i^{\dagger} | L_1= \ell_1, \ldots, L_{i-1}=\ell_{i-1}  \big]  \, ,
\end{align}
where $\X_{\cS}=\{Y^\dagger Y/\tr\,Y^\dagger Y : Y \in \cS\}$.
If $(L_i)_{i \in [n]}$ is distributed on a group $\cG \leq \mathrm{GL}(d,\C)$ we further have
\begin{align} \label{eq_main_LB_general_dep}
2 \gamma_{d,n} \geq  \frac{1}{n} \sum_{i=1}^n \min_{X \in \X_{\cG}, \ell_1,\ldots,\ell_{i-1}  \in \cG} \E \big[ \log \tr\,X L_i L_i^{\dagger} | L_1= \ell_1, \ldots, L_{i-1}=\ell_{i-1}  \big]  - \frac{\log d}{n}\, .
\end{align} 
We note that the proof follows the same lines as the proof of Theorem~\ref{thm_main} given in Section~\ref{sec_pfMain}.
The bounds~\eqref{eq_main_UB_general_dep} and~\eqref{eq_main_LB_general_dep} can be simplified if the random matrices satisfy a certain dependence structure. For example in case $(L_i)_{i \in [n]}$ form a Markov chain in order $L_{k-1} \leftrightarrow L_{k}\leftrightarrow L_{k+1}$ the formulas \eqref{eq_main_UB_general_dep} and~\eqref{eq_main_LB_general_dep} simplify to
\begin{align} \label{eq_main_UB_general_markov}
2\gamma_{1,n} 
\leq \frac{1}{n} \E \log \tr\, L_1 L_1^\dagger +  \frac{1}{n} \sum_{i=2}^n \max_{X \in \X_{\cS}, \ell_{i-1}  \in \cS} \E \big[ \log \tr\,X L_i L_i^{\dagger} | L_{i-1}=\ell_{i-1}  \big] 
\end{align}
and 
\begin{align} \label{eq_main_LB_general_markov}
2 \gamma_{d,n} \geq  \frac{1}{n} \sum_{i=1}^n \min_{X \in \X_{\cG}, \ell_{i-1}  \in \cG} \E \big[ \log \tr\,X L_i L_i^{\dagger} | L_{i-1}=\ell_{i-1}  \big]  - \frac{\log d}{n}\, .
\end{align}
\end{remark}

If the random matrices form a Markov chain or are even i.i.d.~the bounds can be further simplified in the limit $n \to \infty$.
\begin{corollary} \label{cor_main}
Let $d \in \N$ and let $(L_i)_{i \in \N}$ be a sequence of i.i.d.~random matrices on a semigroup $\cS \leq \C^{d \times d}$ that satisfies~\eqref{eq_regAss}. Then
\begin{align} \label{eq_main_UB}
2\gamma_1 \leq \max_{X \in \X_{\cS}} \E \log \tr\,X L_1 L_1^{\dagger} \ ,
\end{align}
where $\X_{\cS}=\{Y^\dagger Y/\tr\,Y^\dagger Y : Y \in \cS\}$.
If $(L_i)_{i \in \N}$ is distributed on a group $\cG \leq \mathrm{GL}(d,\C)$ we further have
\begin{align} \label{eq_main_LB}
2 \gamma_d \geq \min_{X \in \X_{\cG}} \E \log \tr\,X L_1 L_1^{\dagger}\, .
\end{align}
\end{corollary}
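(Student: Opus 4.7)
The plan is to simply pass to the limit $n\to\infty$ in Theorem~\ref{thm_main}. Because the matrices $L_i$ are i.i.d., each $L_i$ is distributed as $L_1$, so the terms in the maximization sum of~\eqref{eq_main_UB_general} are all equal to a common value
\[
M := \max_{X \in \X_{\cS}} \E \log \tr\,X L_1 L_1^{\dagger},
\]
and the sum collapses. Likewise, under the group assumption, the minimization sum in~\eqref{eq_main_LB_general} collapses to $n$ copies of $m := \min_{X \in \X_{\cG}} \E \log \tr\,X L_1 L_1^{\dagger}$. Consequently, Theorem~\ref{thm_main} specialized to the i.i.d.~case yields, for every $n$,
\begin{align*}
2\gamma_{1,n} \leq \frac{1}{n}\E \log \tr\, L_1 L_1^{\dagger} + \frac{n-1}{n}\, M
\quad\text{and}\quad
2\gamma_{d,n} \geq m - \frac{\log d}{n}.
\end{align*}

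Next I would take $n\to\infty$. Since i.i.d.~sequences are in particular stationary, the limits defining $\gamma_1$ and $\gamma_d$ in~\eqref{eq_defLyap} exist (cf.~Appendix~\ref{app_limitExists}), so $\gamma_{1,n}\to\gamma_1$ and $\gamma_{d,n}\to\gamma_d$. The corrective term $\frac{\log d}{n}$ clearly vanishes. For the other corrective term, the regularity assumption~\eqref{eq_regAss} combined with the bound $\tr\, L_1 L_1^{\dagger}\leq d\,\sigma_{\max}(L_1)^2$ gives $\E\log^{+}\! \tr\, L_1 L_1^{\dagger}<\infty$, so $\E\log \tr\, L_1 L_1^{\dagger}$ is either finite or equal to $-\infty$; in the former case $\tfrac{1}{n}\E\log\tr\, L_1 L_1^{\dagger}\to 0$, and in the latter the upper bound on $2\gamma_{1,n}$ already reads $-\infty$ for every $n$, forcing $\gamma_1=-\infty$ so that~\eqref{eq_main_UB} holds trivially. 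Either way, we conclude $2\gamma_1\leq M$ and $2\gamma_d\geq m$, which is the claimed statement.

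The main substance of the corollary thus lies entirely in Theorem~\ref{thm_main}; passing from finite $n$ to the asymptotic statement is a routine manipulation with no real obstacle. The only mild care required is the handling of the possibly infinite expectation $\E\log \tr\, L_1 L_1^{\dagger}$, which can arise when $\cS$ contains a non-invertible matrix with positive probability, but this is a side technicality rather than a genuine difficulty.
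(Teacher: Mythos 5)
Your proposal is correct and takes essentially the same route as the paper: specialize Theorem~\ref{thm_main} to the i.i.d.\ case, pass to the limit $n\to\infty$ using that i.i.d.\ sequences are stationary so the limits in~\eqref{eq_defLyap} exist, and note that the terms $\frac{1}{n}\E\log\tr\,L_1L_1^\dagger$ and $\frac{\log d}{n}$ vanish. Your explicit treatment of the case $\E\log\tr\,L_1L_1^\dagger=-\infty$ is in fact slightly more careful than the paper's one-line remark, but it is the same argument.
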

\begin{proof}
 The assertion follows from Theorem~\ref{thm_main} by considering the limit $n \to \infty$. The justification that the limits do exist for i.i.d.~random matrices is given in Section~\ref{sec_intro}. We note that assumption~\eqref{eq_regAss} implies that $\E \log \tr L_1 L_1^\dagger \leq d \, \E \log \lambda_{\max}(L_1 L_1^\dagger) < \infty$ and hence $\lim_{n \to \infty}  \frac{1}{n} \E \log \tr\, L_1 L_1^\dagger = 0$ which shows that the first term in~\eqref{eq_main_UB_general} vanishes in the limit $n \to \infty$. 
\end{proof}

There is another asymptotic version of Theorem~\ref{thm_main} where we enforce a rank-one constraint on the optimizers. 
\begin{corollary} \label{cor2_main}
Let $d \in \N$ and let $(L_i)_{i \in \N}$ be a sequence of i.i.d.~random matrices on $\GL(d,\C)$ with a distribution that satisfies~\eqref{eq_regAss} and has compact support. Then
\begin{align} \label{eq_main_UB_2}
2\gamma_1 \leq \max_{X \in \X'_{\C^{d \times d}}} \E \log \tr\,X L_1 L_1^{\dagger} \qquad \text{and} \qquad 2 \gamma_d \geq \min_{X \in \X'_{\C^{d \times d}}} \E \log \tr\,X L_1 L_1^{\dagger}\, ,
\end{align}
where $\X'_{\C^{d \times d}}=\{Y^\dagger Y/\tr\,Y^\dagger Y : Y \in \C^{d \times d}, \, \rank \,Y =1\}$.
\end{corollary}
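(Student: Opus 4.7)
The proof splits along the two bounds, with different techniques needed for each direction once the semigroup structure is dropped in favour of the rank-one constraint.

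For the lower bound I would deduce it from Corollary~\ref{cor_main} by a continuity-plus-convexity reduction. Applying Corollary~\ref{cor_main} with $\cG = \GL(d,\C)$ yields $2\gamma_d \geq \min_{X \in \X_{\GL(d,\C)}} \E \log \tr X L_1 L_1^\dagger$, and the set $\X_{\GL(d,\C)}$ is exactly the set of strictly positive-definite density matrices on $\C^d$, i.e.\ the relative interior of the compact convex set $\cD$ of all density matrices on $\C^d$. Under compact support in $\GL(d,\C)$, $\sigma_{\min}(L)$ and $\sigma_{\max}(L)$ are bounded away from $0$ and $\infty$, so $\sigma_{\min}^2(L) \leq \tr X L L^\dagger \leq \sigma_{\max}^2(L)$ and $\log \tr X L L^\dagger$ is uniformly bounded for $X\in\cD$. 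Dominated convergence then shows that $X \mapsto \E \log \tr X L_1 L_1^\dagger$ extends continuously from $\X_{\GL(d,\C)}$ to all of $\cD$, so the infimum coincides with the minimum over $\cD$. Finally, this functional is concave in $X$ (expectation of the concave logarithm composed with a linear form in $X$), and Bauer's minimum principle forces the minimum of a concave lower-semicontinuous function on a compact convex set to be attained at an extreme point---the rank-one projectors, which are precisely the elements of $\X'_{\C^{d\times d}}$.

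For the upper bound the extreme-point shortcut is unavailable, since the maximum of a concave function on a convex set need not sit at an extreme point. Instead I would invoke Furstenberg's variational characterization of the top Lyapunov exponent, applied to the adjoint sequence $(L_i^\dagger)$ which has the same Lyapunov spectrum as $(L_i)$ because $\sigma_k(L) = \sigma_k(L^\dagger)$. Compact support in $\GL(d,\C)$ makes the Markov kernel $\bar\xi \mapsto \overline{L^\dagger \xi}$ on the compact projective space of $\C^d$ Feller continuous, so Krylov--Bogolyubov gives a stationary probability measure $\nu$, and Furstenberg's formula (see e.g.\ \cite{viana18}) provides such a $\nu$ achieving
\begin{equation*}
\gamma_1 = \int \int \log \frac{\norm{L^\dagger \xi}}{\norm{\xi}}\, d\mu(L)\, d\nu(\bar\xi),
\end{equation*}
where $\mu$ is the law of $L_1$. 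Doubling and rewriting $\log (\norm{L^\dagger \xi}/\norm{\xi})^2 = \log \bra{\hat\xi} L L^\dagger \ket{\hat\xi} = \log \tr\proj{\hat\xi} L L^\dagger$, Fubini (legal because compact support makes the integrand uniformly bounded) gives
\begin{equation*}
2\gamma_1 = \int \E \log \tr \proj{\hat\xi} L_1 L_1^\dagger\, d\nu(\bar\xi) \leq \sup_{\bar\xi} \E \log \tr \proj{\hat\xi} L_1 L_1^\dagger = \max_{X \in \X'_{\C^{d \times d}}} \E \log \tr X L_1 L_1^\dagger.
\end{equation*}

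The main obstacle I expect is invoking Furstenberg's formula with the right hypotheses: one needs the version valid under compact support and the fact that the supremum over $\mu^\dagger$-stationary measures is attained, but since the action on the compact projective space is continuous under compact support this is standard. The lower bound is essentially a corollary of Corollary~\ref{cor_main} combined with two textbook facts (continuous extension from the relative interior, and Bauer's minimum principle), with the only real subtlety being the passage from the open set $\X_{\GL(d,\C)}$ of positive-definite density matrices to its closure $\cD$, which is exactly what the compact-support hypothesis is there to guarantee.
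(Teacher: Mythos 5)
Your proposal is correct, but the upper bound is proved by a genuinely different route than the paper's. For $2\gamma_d$ you essentially reproduce the paper's own (very terse) argument: the paper relaxes to the density-matrix set and says that concavity forces the minimum onto ``the boundary'' of $\X_{\C^{d\times d}}$, meaning the extreme points, i.e.\ the rank-one projectors; your version via continuous extension from the positive-definite interior plus Bauer's minimum principle is the same idea, stated more carefully (and in fact the inclusion $\X_{\GL(d,\C)}\subseteq\X_{\C^{d\times d}}$ already suffices, so the density/extension step is not even needed). For $2\gamma_1$ the paper does not use Furstenberg's formula at all: it reruns the telescoping decomposition from Theorem~\ref{thm_main}, uses Kingman's theorem to show that $(L^{n})^\dagger L^{n}/\tr\,(L^{n})^\dagger L^{n}$ concentrates, when $\gamma_1>\gamma_2$, on the rank-one projector onto the top eigenspace (up to an exponentially small multiple of $\id_d$), and then disposes of the degenerate case $\gamma_1=\gamma_2$ by perturbing the law to obtain a simple Lyapunov spectrum and invoking continuity of $\mu\mapsto\gamma_1(\mu)$ (Theorem~\ref{thm_viana}). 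You instead apply the Furstenberg--Kifer variational formula to the adjoint sequence, getting a stationary measure $\nu$ on projective space with $2\gamma_1=\int\E\log\tr\,\proj{\hat\xi}L_1L_1^\dagger\,d\nu(\bar\xi)\leq\max_{X\in\X'_{\C^{d\times d}}}\E\log\tr\,X L_1L_1^\dagger$; under compact support in $\GL(d,\C)$ the kernel is Feller, the integrand is bounded, and attainment of the max over stationary measures is standard, so this works (one small gloss: identifying the spectrum of $(L_i^\dagger)$ with that of $(L_i)$ uses $\sigma_k\bigl((L_1\cdots L_n)^\dagger\bigr)=\sigma_k(L_1\cdots L_n)$ together with exchangeability of the i.i.d.\ factors under order reversal, not just $\sigma_k(L)=\sigma_k(L^\dagger)$ for a single matrix). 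What each buys: your route is shorter and avoids both the $\gamma_1=\gamma_2$ case split and the continuity machinery of Appendix~B, at the price of importing a classical external result; the paper's route is self-contained given Theorem~\ref{thm_main} and, because it tracks the top eigenprojector of the partial products, it is what underlies the strengthening in Remark~\ref{rmk_generalization} (rank-one optimizers constrained to a group $\cG$ when the projectors $\Pi_{GG^\dagger}$ stay in $\cG$), which the stationary-measure argument does not directly deliver.
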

The proof of Corollary~\ref{cor2_main} is given in Section~\ref{sec_proofCor}. We note that the major difference between Corollary~\ref{cor_main} and Corollary~\ref{cor2_main} is that in the latter we can assume that the optimizer has rank-one, at the cost of considering $\C^{d \times d}$ instead of $\cS$ or $\cG$.
Example~\ref{ex_rank1} shows that there exist scenarios where Corollary~\ref{cor2_main}  outperforms Corollary~\ref{cor_main}. Example~\ref{ex_rank1} presents a case where the bounds from Corollary~\ref{cor2_main} are even tight. 
\begin{remark} \label{rmk_generalization}
We note that Corollary~\ref{cor2_main} unlike Corollary~\ref{cor_main} does not respect the possible (semi)group structure of the random matrices. As visible from the proof it is possible to strengthen Corollary~\ref{cor2_main}. More precisely suppose we have a sequence of random matrices $(L_i)_{i \in \N}$ on a group $\cG \leq \C^{d \times d}$ with distribution $\mu$ such that there exits a family of joint distributions $(\mu'_n)_{n \in \N}$ such that
\begin{enumerate}
\item $\lim_{n \to \infty} \norm{\mu'_n - \mu}_1 = 0$ \label{it_1}
\item $\gamma_{1,n}' > \gamma_{2,n}'$ for all $n \in \N$ where $\gamma_{1,n}'$ and $\gamma_{2,n}'$ denote the largest and second largest Lyapunov exponents of $(L'_{i,n})_{i \in \N}$ distributed according to $\mu'_n$  \label{it_2}
\item for all $G \in \cG$ we have $\Pi_{GG^\dagger} \in \cG$, where $\Pi_{GG^\dagger}$ denotes the projector onto the eigenspace corresponding to the largest eigenvalue of $GG^\dagger$.  \label{it_3}
\end{enumerate}
Then it follows from the proof of Corollary~\ref{cor2_main} that~\eqref{eq_main_UB_2} is valid for $\X'_{\cG}$.

As an example, it is an easy exercise to verify that the group $\cG=\cD(d,\C)$ of commutative diagonalizable invertible matrices satisfies Properties~\ref{it_1},~\ref{it_2}, and~\ref{it_3} above with $\mu'_n = (1-\frac{1}{n}) \mu + \frac{1}{n} \tilde \mu$ where $\tilde \mu$ is the joint distribution of uniformly distributed matrices on $\cD(d,\C)$. We refer to Section~\ref{sec_commutative} for a more precise discussion of the commutative case.
\end{remark}

Evaluating the bounds in Corollary~\ref{cor_main} above may not be straightforward as the sets $\X_{\cS}$ and $\X_{\cG}$ can be complicated depending on the structure of $\cS$ and $\cG$. It is always possible to weaken the bounds by relaxing $\X_{\cS}$ and $\X_{\cG}$ to $\X_{\C^{d \times d}}$ which is equal to the set of positive semidefinite $d\times d$ matrices with trace one, also known as \emph{density matrices}.\footnote{We note that in case $\X_{\cS}$ contains the zero matrix this matrix can be removed from the set because for the upper bound the zero matrix would only be relevant in trivial scenarios where the upper bound is $-\infty$.} One important advantage of working with $\X_{\C^{d \times d}}$ is that due to the convexity of $\X_{\C^{d \times d}}$ the bound for $\gamma_1$ can be efficiently evaluated, i.e., the maximization is efficiently computable.  
\begin{proposition} \label{prop_SDP}
If the bound in~\eqref{eq_main_UB} is relaxed by using $\X_{\C^{d \times d}}$ instead of $\X_{\cS}$ it can be computed efficiently. More precisely, we have
\begin{align}
2\gamma_1 \leq \max_{X \in \X_{\C^{d\times d}}} \E \log \tr\,X L_1 L_1^{\dagger} \leq \log \max_{X \in \X_{\C^{d\times d}}}  \E \, \tr\, X L_1 L_1^{\dagger}\, ,
\end{align}
where the first and second bounds are a convex and semidefinite optimization problem, respectively.
\end{proposition}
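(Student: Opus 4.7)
The plan is to decompose the statement into three small facts, each of which follows from standard arguments once the objective is put in the right form. Throughout, write $M:=\E L_1 L_1^\dagger$, which is a well-defined positive semidefinite matrix under assumption~\eqref{eq_regAss}.

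First I would establish the two inequalities. The left inequality is the relaxation $\X_{\cS}\subseteq \X_{\C^{d\times d}}$ combined with Corollary~\ref{cor_main}: enlarging the feasible set of a maximization can only increase the optimal value, which yields $2\gamma_1\le \max_{X\in\X_{\C^{d\times d}}}\E\log\tr\, XL_1L_1^\dagger$. For the right inequality, I would apply Jensen's inequality to the concave function $\log$, together with the linearity of $X\mapsto \tr\, XL_1L_1^\dagger$ and of expectation, to obtain
\begin{equation*}
\E\log\tr\, XL_1L_1^\dagger \;\le\; \log \E\tr\, XL_1L_1^\dagger \;=\; \log\tr\, XM \, ,
\end{equation*}
and then take $\max$ over $X\in\X_{\C^{d\times d}}$ on both sides.

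Next I would argue that the first optimization is a convex program. The map $X\mapsto \tr\, XLL^\dagger$ is linear in $X$ and takes positive values on the interior of $\X_{\C^{d\times d}}$, so $X\mapsto \log\tr\, XLL^\dagger$ is concave (as the composition of the concave nondecreasing function $\log$ with a linear function). Concavity is preserved by expectation, so $X\mapsto \E\log\tr\, XL_1L_1^\dagger$ is concave on the convex set $\X_{\C^{d\times d}}$ of density matrices. Hence the maximization is a concave maximization over a compact convex set, i.e.\ a convex program, which can be solved efficiently by standard methods~\cite{boyd_book}.

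Finally, I would recast the second optimization as a semidefinite program. Using $\E\tr\, XL_1L_1^\dagger=\tr\, XM$, the right-hand side reads
\begin{equation*}
\log\!\max_{X\in\X_{\C^{d\times d}}}\tr\, XM \;=\; \log\!\max\{\tr\, XM : X\ge 0,\; \tr\, X=1\} \, ,
\end{equation*}
which is the log of a linear objective over the spectrahedron of density matrices, a standard SDP (whose value coincides with $\log\lambda_{\max}(M)$, and $\log$ is a monotone post-processing of the SDP optimal value). No single step poses a real obstacle; the only point that requires slight care is justifying the use of Jensen's inequality, which is legitimate because $\tr\, XL_1L_1^\dagger\ge 0$ almost surely and because $\E\tr\, XL_1L_1^\dagger<\infty$ by assumption~\eqref{eq_regAss} (so the right-hand side is well defined, and the left-hand side is either finite or $-\infty$).
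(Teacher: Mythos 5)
Your proposal is correct and follows essentially the same route as the paper: the second inequality is Jensen's inequality plus monotonicity of the logarithm, the first optimization is a concave maximization over the convex set of density matrices, and the relaxed bound is (the log of) a linear objective over that set, i.e.\ an SDP with value $\log\lambda_{\max}(\E L_1L_1^\dagger)$. One small inaccuracy in your justification: assumption~\eqref{eq_regAss} only guarantees $\E\log^+\sigma_{\max}(L_1)<\infty$, not $\E\,\tr\,XL_1L_1^\dagger<\infty$, but this is harmless since $\E\log Z\le\log\E Z$ remains valid (trivially) when $\E Z=\infty$.
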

\begin{proof}
We start by recalling that the set $\X_{\C^{d\times d}}$ can be written as
\begin{align}
\X_{\C^{d\times d}} = \{Y^\dagger Y/\tr\,Y^\dagger Y : Y \in \C^{d \times d}\} =\{ Y \in \C^{d \times d} :Y \geq 0, \tr\, Y =1\} \, . 
\end{align}
Jensen's inequality together with the monotonicity of the logarithm implies that
\begin{align}
\max_{X \in \X_{\C^{d\times d}}} \E \log \tr\,X L_1 L_1^{\dagger} \leq \log \max_{X \in \X_{\C^{d\times d}}}  \E \, \tr\, X L_1 L_1^{\dagger} \, .
\end{align}
The first expression is a convex optimization problem as we are maximizing a concave function over a convex set~\cite{boyd_book}. The second expression is even a semidefinite program as the objective function is linear.
\end{proof}
All semidefinite programs and most convex optimization problems can be solved efficiently by modern algorithms~\cite{boyd_book}.

The following remark compares the bounds from Theorem~\ref{thm_main} with bounds that can be obtained straightforwardly by using the submultiplicativity of the largest singular value. 
\begin{remark}[Comparison with trivial bounds] \label{rmk_simpleBounds}
The submultiplicativity of the largest singular value, i.e., $\sigma_{\max}(L_1 L_2) \leq \sigma_{\max}(L_1) \sigma_{\max}(L_2)$ for any two matrices $L_1$ and $L_2$~\cite{bhatia_book}, implies the following non-asymptotic bounds for the maximal and minimal Lyapunov exponent
\begin{align} \label{eq_simpleBound}
2\gamma_{1,n} \leq \frac{1}{n} \sum_{i=1}^n  \E\log\lambda_{\max}(L_i L_i^\dagger) \qquad \text{and} \qquad 2 \gamma_{d,n} \geq \frac{1}{n} \sum_{i=1}^n   \E \log\lambda_{\min}(L_i L_i^\dagger) \, ,
\end{align}
where for the lower bound we assume that $(L_n)_{n \in \N}$ is such that only invertible matrices have positive probability to occur (as otherwise $\gamma_d = - \infty$).
This is correct since
\begin{align} \label{eq_LyapunovUB}
2\gamma_{1,n} 
=  \frac{2}{n} \E \log  \sigma_{\max}\Big( \prod_{i=1}^n L_i \Big)
\leq  \frac{2}{n} \E  \sum_{i=1}^n \log \sigma_{\max}(L_i)
=\frac{1}{n} \sum_{i=1}^n \E \log \lambda_{\max}(L_i L_i^\dagger) \, , 
\end{align}
where the final step uses~\eqref{svd2eig}. The lower bound follows by similar arguments. By the submultiplicativity of the maximal singular value we have $\sigma_{\min}(L_1 L_2) \geq \sigma_{\min}(L_1) \sigma_{\min}(L_2)$ for any $L_1, L_2 \in \GL(d,\C)$.\footnote{This follows from that fact that for any invertible matrix $L$ we have $\sigma_{\max}(L)= 1/\sigma_{\min}(L^{-1})$.} Hence we obtain
\begin{align} \label{eq_LyapunovLB_trivial}
2\gamma_{d,n} 
=  \frac{2}{n} \E \log \sigma_{\min}\Big(\prod_{i=1}^n L_i\Big)
\geq  \frac{2}{n}  \E \sum_{i=1}^n \log \sigma_{\min}(L_i)
= \frac{1}{n} \sum_{i=1}^n  \E \log \lambda_{\min}(L_1 L_1^\dagger)\, ,
\end{align}
where the final step follows from~\eqref{svd2eig}.

If the random matrices are independent the bounds~\eqref{eq_simpleBound} immediately follow from~\eqref{eq_main_UB_general} and~\eqref{eq_main_LB_general}. To see this we rewrite~\eqref{eq_simpleBound} using the variational formulas~\eqref{eq_smallestEig} as
\begin{align} \label{eq_simple2}
2\gamma_{1,n} \leq  \frac{1}{n} \sum_{i=1}^n  \E  \max_{X \in \X_{\C^{d \times d}}} \log \tr\,X L_i L_i^{\dagger}  \qquad \text{and} \qquad 2 \gamma_{d,n} \geq   \frac{1}{n} \sum_{i=1}^n \E \min_{X \in \X_{\C^{d \times d}}} \log \tr\,X L_i L_i^{\dagger}  \ .
\end{align}
This shows that the bounds~\eqref{eq_main_UB_general} and~\eqref{eq_main_LB_general} are stronger than~\eqref{eq_simpleBound} and~\eqref{eq_simple2} as the former imply the latter by relaxing them (by using $\X_{\C^{d \times d}}$ instead of $\X_{\cS}$ and $\X_{\cG}$) and swapping the expectation with the maximization and minimization, respectively.\footnote{We note that in the lower bound for $\gamma_{d,n}$ in~\eqref{eq_main_LB_general} we have an additional term $O(\frac{\log d}{n})$ term, which however in practice does not matter as it vanishes for large values of $n$.}
We note that in case of dependent random matrices the bounds~\eqref{eq_simpleBound} may outperform~\eqref{eq_main_UB_general} and~\eqref{eq_main_LB_general}. 
\end{remark}

It is possible to distill further bounds on $\gamma_1$ and $\gamma_d$ from the bounds given by Theorem~\ref{thm_main} via a simple relation between the largest and smallest singular value of an invertible matrix, i.e., $\sigma_{\max}(L) = 1/\sigma_{\min}(L^{-1})$.
\begin{remark}
Let $(L_i)_{i \in \N}$ be a sequence of i.i.d.~random matrices on a group $\cG \leq \GL(d,\C)$. By utilizing the fact that $\sigma_{\max}(L) = 1/\sigma_{\min}(L^{-1})$ we find
 \begin{align}
 \gamma_1 
= \lim_{n\to \infty} \frac{1}{n} \E \log \sigma_{\max}Ê\Big( \prod_{i=1}^n L_i  \Big)
= - \lim_{n\to \infty} \frac{1}{n} \E \log \sigma_{\min}Ê\Big( \prod_{i=1}^n L_i^{-1}  \Big)
\end{align}
and an analogous expression for $\gamma_d$. We can apply this simple observation to the bounds from Corollary~\ref{cor_main} and find
\begin{align}
2 \gamma_1 \leq \min\big\{ \max_{X \in \X_{\cG}} \E \log \tr\,X L_1 L_1^{\dagger}, - \min_{X \in \X_{\cG}} \E \log \tr\,X (L_1^\dagger)^{-1} L_1^{-1} \big \}
\end{align}
and
\begin{align}
2 \gamma_d \geq \max \big\{\min_{X \in \X_{\cG}} \E \log \tr\,X L_1 L_1^{\dagger}, - \max_{X \in \X_{\cG}} \E \log \tr\,X(L_1^\dagger)^{-1} L_1^{-1} \big\} \, .
\end{align}
We note that instead of Corollary~\ref{cor_main} we could also improve the bounds of Theorem~\ref{thm_main} and Corollary~\ref{cor2_main} with this observation.
We note that in case the group $\cG$ has additional symmetry this may be useful to further improve the bounds. For example in case of $\cG=\SL(2,\R)$ we know that $\gamma_1 = - \gamma_2$ which turns out to be useful. This is explained in Example~\ref{ex_transferMatrix}.
\end{remark}

As mentioned in Section~\ref{sec_intro} for various applications it is of interest to show that the top Lyapunov exponent is strictly positive. Hence a good lower bound on $\gamma_1$ would be a desirable tool. The techniques used in this manuscript lead to an upper bound for $\gamma_1$ and a lower bound on $\gamma_d$, but presumably not to a good lower bound on $\gamma_1$.\footnote{A lower bound for $\gamma_d$ is by definition also a lower bound for $\gamma_1$, which however is not necessarily good since $\gamma_d$ and $\gamma_1$ can be far apart.} The interested reader may consult~\cite{lemm20} for recent advances on proving analytical lower bounds on $\gamma_1$.

We conclude this section with a remark about a generic possibility to further improve the bounds. The idea is to apply the new bounds to (small) products of matrices, i.e., for example we could consider a sequence of matrices $(L_{1,i}L_{2,i})_{i \in \N}$. The larger we choose these products, the better the bound performs at the cost that the evaluation of the bounds gets more complicated.

%%%
\subsection{Proof of Theorem~\ref{thm_main}} \label{sec_pfMain}
To simplify notation we denote $L^{n}:=\prod_{k=1}^n L_{k}$ and $L_m^{n}:=\prod_{k=m}^n L_{k}$ for $m \leq n$. Let $\mu_{L^n}$ denote the joint distribution of $L^n$.
We first prove the upper bound for the maximal Lyapunov exponent. By properties of Schatten norms, i.e., $\sigma_{\max}(A) \leq \norm{A}_2$ for every $ A \in \C^{d \times d}$ we have 
\begin{align}
2\gamma_{1,n} 
\leq  \frac{2}{n} \E \log \norm{L^n}_2
=  \frac{1}{n} \E \log \tr \, L^n (L^n)^\dagger  
\end{align}
and thus 
\begin{align}
2 \gamma_{1,n} 
\leq \frac{1}{n}\int  \log\big( \tr\, L^n (L^n)^{\dagger}\big)  \mu_{L^n}(\di L^n) \, .
\end{align}
We can write
\begin{align}
2 \gamma_{1,n}
&\leq \frac{1}{n} \int \log \tr\, L_1 L_1^\dagger \mu_{L_1}(\di L_1)  + \frac{1}{n}   \int  \sum_{i=2}^n \log\Big( \frac{\tr \, L^i (L^i)^\dagger}{\tr \, L^{i-1} (L^{i-1})^\dagger}\Big) \mu_{L^n}(\di L^n)\\
&=\frac{1}{n} \int \log \tr\, L_1 L_1^\dagger \mu_{L_1}(\di L_1) + \frac{1}{n}  \sum_{i=2}^n  \int  \log \Big( \frac{\tr \, L^i (L^i)^\dagger}{\tr \, L^{i-1} (L^{i-1})^\dagger}\Big)  \mu_{L^n}(\di L^n)  \\
&= \frac{1}{n} \int \log \tr\, L_1 L_1^\dagger \mu_{L_1}(\di L_1) \nonumber \\
&\hspace{10mm}+  \frac{1}{n} \sum_{i=2}^n   \int   \log \Big(\frac{\tr \, L^i (L^i)^\dagger}{\tr \, L^{i-1} (L^{i-1})^\dagger} \Big) \mu_{L_i}(\di L_i)  \mu_{(L^{i-1},L_{i+1}^n)}(\di L^{i-1}, \di L_{i+1}^n, L_i) \\
&=   \frac{1}{n} \E \log \tr L_1 L_1^\dagger + \frac{1}{n}   \sum_{i=2}^n   \int    \log \Big(\frac{\tr \, L^i (L^i)^\dagger}{\tr \, L^{i-1} (L^{i-1})^\dagger} \Big) \mu_{L_i}(\di L_i)  \mu_{(L^{i-1},L_{i+1}^n)}(\di L^{i-1}, \di L_{i+1}^n, L_i) \ ,
\end{align}
where $\mu_{(L^{i-1},L_{i+1}^n)}(\di L^{i-1}, \di L_{i+1}^n, L_i)$ denotes the conditional joint distribution of $(L^{i-1},L_{i+1}^n)$.
We now have a natural upper bound on each term $i \geq 2$
\begin{align}
\int   \log \Big(\frac{\tr \, L^i (L^i)^\dagger}{\tr \, L^{i-1} (L^{i-1})^\dagger} \Big) \mu_{L_i}(\di L_i)  \mu_{(L^{i-1},L_{i+1}^n)}(\di L^{i-1}\!\!, \di L_{i+1}^n, L_i) 
 \leq \max_{X \in \X_{\cS}} \E  \log \tr\,X L_i L_i^{\dagger}   \, . \label{eq_step1} 
\end{align}
We note that for $i \geq 2$ we have 
\begin{align} \label{eq_optimizerUB}
X= \frac{(L^{i-1})^\dagger L^{i-1}}{ \tr \, (L^{i-1})^\dagger L^{i-1}}  \in\X_{\cS} \, ,
\end{align}
which thus justifies the inequality above. %For $i = 1$, the corresponding $X$ is the identity which has a constant trace $d$.
Combining the previous steps gives
\begin{align} \label{eq_switchSets}
2 \gamma_{1,n} 
\leq \frac{1}{n} \E \log \tr L_1 L_1^\dagger + \frac{1}{n} \sum_{i=2}^n \max_{X \in \X_{\cS}} \E \log \tr\,X L_i L_i^{\dagger}    \, .
\end{align}

With a similar proof technique we obtain the asserted lower bound for the minimal Lyapunov exponent. By assumption $\mu$ is defined on $\cG \leq \GL(d,\C)$.
With the inequality for the smallest eigenvalue given in~\eqref{eq_minEigGroup} we have
\begin{align}
2 \gamma_{d,n} 
=  \frac{1}{n} \E\, \log \lambda_{\min}\big( L^n (L^n)^\dagger \big) 
\geq  \frac{1}{n} \E\, \log \min_{W  \in \cG}\frac{1}{\tr\, W^\dagger W} \tr\, W^\dagger W L^n  (L^n)^\dagger - \frac{\log d}{n}  \, .
\end{align}
Using the same notation as above we find
\begin{align}
2 \gamma_{d,n} 
\geq  \frac{1}{n} \int   \log \Big( \min_{W \in \cG} \frac{1}{\tr\, W^\dagger W} \tr\, W^\dagger W L^n (L^n)^\dagger \Big)  \mu_{L^n}(\di L^n) - \frac{\log d}{n} \, .
\end{align}
Let us denote the optimizer in the minimization above by $\bar W$.\footnote{Without loss of generality we can assume that $\tr \bar W^\dagger \bar W =1$ since we can renormalize the terms as we wish.} With some abuse of notation, i.e., $L^0 = \id_d$ we can write
\begin{align}
2 \gamma_{d,n} 
&\geq \frac{1}{n} \int  \sum_{i=1}^n \log \Big( \frac{\tr \, \bar W^\dagger \bar W L^{i} (L^{i})^{\dagger} }{\tr \, \bar W^\dagger \bar W L^{i-1} (L^{i-1})^{\dagger}  } \Big) \mu_{L^n}(\di L^n) - \frac{\log d}{n}  \label{eq_decS} \\
&=  \frac{1}{n}  \sum_{i=1}^n     \int   \log \Big( \frac{\tr \, \bar W^\dagger \bar W L^{i} (L^{i})^{\dagger} }{\tr \, \bar W^\dagger \bar W L^{i-1} (L^{i-1})^{\dagger}  } \Big)   \mu_{L^n}(\di L^n) - \frac{\log d}{n} \\
&=   \frac{1}{n} \sum_{i=1}^n \!  \int \!  \log \Big( \frac{\tr \, \bar W^\dagger \bar W L^{i} (L^{i})^{\dagger} }{\tr \, \bar W^\dagger \bar W L^{i-1} (L^{i-1})^{\dagger}  } \Big)  \mu_{L_i}(\di L_i)  \mu_{(L^{i-1},L_{i+1}^n)}(\di L^{i-1}, \di L_{i+1}^n, L_i)\! - \! \frac{\log d}{n} \ . \label{eq_decF}
\end{align}
Each term for $i\geq 1$ can be bounded from below as 
\begin{align}
&\int   \log \Big( \frac{\tr \, \bar W^\dagger \bar W L^{i} (L^{i})^{\dagger} }{\tr \, \bar W^\dagger \bar W L^{i-1} (L^{i-1})^{\dagger}  } \Big)  \mu_{L_i}(\di L_i)  \mu_{(L^{i-1},L_{i+1}^n)}(\di L^{i-1}, \di L_{i+1}^n, L_i) \geq \min_{X \in \X_{\cG}} \E \log \tr\,X L_i L_i^{\dagger} \, . \label{eq_stepD1}
\end{align}
The above inequality holds because
\begin{align} \label{eq_opt2}
X= \frac{ (L^{i-1})^\dagger \bar W^\dagger \bar W  L^{i-1}}{ \tr \, (L^{i-1})^\dagger \bar W^\dagger \bar W  L^{i-1}}  \in \X_{\cG} \, ,
\end{align}
which thus justifies the inequality above. Combining the previous steps gives
\begin{align} \label{eq_switchSets2}
2 \gamma_{d,n} 
\geq   \frac{1}{n} \sum_{i=1}^n \min_{X \in \X_{\cG}} \E  \log \tr\,X L_i L_i^{\dagger}  - \frac{\log d}{n} \, ,
\end{align}
which completes the proof.
\qed

%%%%
\subsection{Proof of Corollary~\ref{cor2_main}} \label{sec_proofCor}
We note that the major difference between Theorem~\ref{thm_main} and Corollary~\ref{cor2_main} is that in the asymptotic setting of Corollary~\ref{cor2_main} we can assume that the optimizer is rank-one. Without this rank-one constraint the result would follow immediately from Theorem~\ref{thm_main} by considering the limit $n \to \infty$ which is explained in detail in the proof of Corollary~\ref{cor_main}. The justification that the limits do exist for stationary random matrices is given in Section~\ref{sec_intro}. 

It thus remains to prove why we can add the rank-one constraint to the optimizers. To see this recall that following the proof of Theorem~\ref{thm_main} we find
\begin{align}
2\gamma_1
\leq  \lim_{n \to \infty} \frac{1}{n} \E \log \tr\, L^n (L^n)^\dagger
= \lim_{n \to \infty} \frac{1}{n}  \sum_{i=2}^n  \int   \log \Big(\frac{\tr \, L^i (L^i)^\dagger}{\tr \, L^{i-1} (L^{i-1})^\dagger} \Big) \mu_{L^n}(\di L^n) \, . \label{eq_stepMid}
\end{align}
From Kingman's subadditive ergodic theorem, see~\eqref{eq_kingman}, we find that for any $\delta >0$ there is an $\eps_n$ with $\lim_{n \to \infty} \eps_n=0$ such that 
\begin{align} \label{eq_ASconv}
\PP\Big( \Big| \frac{1}{n} \log \lambda_{k} \big(L^n  (L^n)^\dagger \big) - 2 \gamma_k \Big | > \delta  \Big) \leq \eps_n \quad \text{for all }k \in[d] 
\end{align}
and hence
\begin{align} \label{eq_ASconv2}
\PP\Big( \lambda_{k} \big(L^n  (L^n)^\dagger \big) \ee^{ -2n \gamma_k}  > \ee^{n\delta}  \Big) \leq \eps_n \quad \text{and} \quad
\PP\Big( \lambda_{k} \big(L^n  (L^n)^\dagger \big) \ee^{ -2n \gamma_k}  < \ee^{-n\delta}  \Big) \leq \eps_n 
\quad \forall \, k \in[d] \, .
\end{align}
To simplify notation let
\begin{align}
M_n := (L^{n})^\dagger L^{n}  \qquad \text{and} \qquad X_n:=  \frac{M_n}{\tr\, M_n} \, . 
\end{align}
Consider the eigendecomposition $M_n= \sum_{k=1}^d \lambda_k^{(n)} \Pi_k^{(n)}$ where $\lambda_1^{(n)} \geq \lambda_2^{(n)} \geq \ldots \geq \lambda_d^{(n)}$ denote the eigenvalues of $M_n$ and $\Pi_k^{(n)}$ is the projector onto the eigenspace of $\lambda_k^{(n)}$. As a result we find
\begin{align}
X_n = \frac{\sum_{k=1}^d \lambda_k^{(n)} \Pi_k^{(n)}}{\sum_{j=1}^d \lambda_j^{(n)} } \, .
\end{align}
We note that $\lambda_k(L^n (L^n)^\dagger)= \lambda_k^{(n)}$ since since the eigenvalues of $Y Y^\dagger$ and $Y^\dagger Y$ are equal for any $Y \in \C^{d \times d}$. Statement~\eqref{eq_ASconv2} thus ensures  that for sufficiently large $n$ we have with high probability 
\begin{align}
\lambda_1^{(n)} \geq \ee^{n(2\gamma_1 - \delta)} \qquad \text{and} \qquad \lambda_k^{(n)} \leq \ee^{n(2\gamma_2 + \delta)} \quad \text{for all } k\geq 2 \, .
\end{align}
As a result we find with probability $1-\eps_n$
\begin{align}
\frac{\lambda_k^{(n)}}{\sum_{j=1}^d \lambda_j^{(n)}} \leq \ee^{-n (\gamma_1 - \gamma_2 + 2 \delta)} \quad \text{for all } k \geq 2 \, .
\end{align}

For the moment we assume that $\gamma_1 > \gamma_2$.
Plugging this into~\eqref{eq_stepMid} shows that 
\begin{align}
2\gamma_1
&= \lim_{n \to \infty} \frac{1}{n}  \sum_{i=2}^n  \int   \log \Big(\frac{\tr \, L^i (L^i)^\dagger}{\tr \, L^{i-1} (L^{i-1})^\dagger} \Big) \mu_{L^n}(\di L^n) \\
&= \lim_{n \to \infty} \frac{1}{n}  \sum_{i=2}^n  \int   \log( \tr\, X_{i-1} L_i L_i^\dagger)   \mu_{L^n}(\di L^n) \\
&= \lim_{n \to \infty} \frac{1}{n}  \sum_{i=2}^n  \int   \log \Big( \tr\, \frac{\sum_{k=1}^d \lambda_k^{(i-1)} \Pi_k^{(i-1)}}{\sum_{j=1}^d \lambda_j^{(i-1)}} L_i L_i^\dagger \Big)   \mu_{L^n}(\di L^n) \\
&\leq  \lim_{n \to \infty} \frac{1}{n}  \sum_{i=2}^n \int  \Big[ (1-\eps_n)    \log  \tr\, \big( (\Pi_1^{(i-1)} + \ee^{-(i-1)(\gamma_1-\gamma_2 +2 \delta)}  \id_d) L_i L_i^\dagger \big) \nonumber \\
&\hspace{30mm} + \eps_n  \log \Big(\frac{\tr \, L^i (L^i)^\dagger}{\tr \, L^{i-1} (L^{i-1})^\dagger} \Big) \Big]  \mu_{L_i}(\di L_i)  \mu_{(L^{i-1},L_{i+1}^n)}(\di L^{i-1}, \di L_{i+1}^n, L_i)\, .
\end{align}
Recall that by assumption $\E \log \sigma_{\max}(L_1) \leq \kappa <\infty$. 
If we split the sum in $i \in \{2,\ldots, \lfloor \sqrt{n} \rfloor\}$ and $i \in \{\lfloor \sqrt{n} \rfloor+1, \ldots, n \}$ we find 
\begin{align}
2\gamma_1
&\leq \lim_{n \to \infty} \frac{1}{n} \sum_{i=2}^{\lfloor \sqrt{n} \rfloor} \kappa + \lim_{n \to \infty} \frac{1}{n} \sum_{i=\lfloor \sqrt{n} \rfloor+1}^{n} \Big [ (1-\eps_n)  \int \log \tr \big( (\Pi_1^{(i-1)} + \ee^{-(i-1)(\gamma_1-\gamma_2 +2 \delta)} \id_d) L_i L_i^\dagger \big) \nonumber \\
&\hspace{60mm} \mu_{L_i}(\di L_i)  \mu_{(L^{i-1},L_{i+1}^n)}(\di L^{i-1}, \di L_{i+1}^n, L_i) + \eps_n \kappa \Big] \\
&\leq \lim_{n \to \infty} \frac{1-\eps_n}{n} \sum_{i=\lfloor \sqrt{n} \rfloor+1}^{n}   \int \log \tr \big( (\Pi_1^{(i-1)} + \ee^{-(\sqrt{n}-1)(\gamma_1-\gamma_2 +2 \delta)} \id_d) L_i L_i^\dagger \big) \nonumber \\
&\hspace{60mm}  \mu_{L_i}(\di L_i)  \mu_{(L^{i-1},L_{i+1}^n)}(\di L^{i-1}, \di L_{i+1}^n, L_i) \\
&= \lim_{n \to \infty}  \int \log \tr \, \Pi_1^{(n)} L_n L_n^\dagger \mu_{L_n}(\di L_n) \mu_{(L^{n-1})}(\di L^{n-1},L_n) \\
&\leq \lim_{n \to \infty}  \max_{X \in \X'_{\C^{d \times d}}}   \int \log \tr \, X L_n L_n^\dagger \, \mu_{L_n}(\di L_n) \\
&= \max_{X \in \X'_{\C^{d \times d}}} \E \log \tr\, X L_1 L_1^\dagger \, ,
\end{align}
for $\X'_{\C^{d \times d}}=\{Y^\dagger Y/\tr\,Y^\dagger Y : Y \in \C^{d \times d},\, \rank\, Y = 1\}$.
The second step uses that $\lim_{n \to \infty} \eps_n = 0$. The penultimate step is true since $\Pi_1^{(n)}$ is a rank-one projector for all $n \in \N$.

In case $\gamma_1 = \gamma_2$ we use a continuity argument to prove the assertion.
Let $\mu$ denote the joint distribution of the matrices $(L_i)_{i \in \N}$. We consider a family of joint distributions $(\mu'_n)_{n\in \N}$ on $\C^{d \times d}$ that is a perturbed version of $\mu$ of the form $\mu'_n = (1-\frac{1}{n}) \mu + \frac{1}{n} \tilde \mu$ where $\tilde \mu$ is the joint distribution of a sequence of matrices on $\C^{d \times d}$ whose entries are chosen uniformly at random with magnitude at most $1$. 
It is easy to see that $\lim_{n \to \infty} \norm{\mu'_n -\mu}_1=0$.
The Lyapunov spectrum of the sequence $(L'_{i,n})_{i \in \N}$ that is distributed according to $\mu'_n$ is simple (i.e., all Lyapunov exponents are distinct) for all $n \in \N$, i.e., in particular $\gamma_1(\mu'_n) > \gamma_2(\mu'_n)$ for all $n \in \N$. This follows from~\cite[Theorem~8.1]{viana_book} together with~\cite[Exercise~8.1]{viana_book}. 
Since $ \mu \mapsto \gamma_1(\mu)$ is continuous for this setup as explained in Section~\ref{sec_continuity}~\cite[Theorem~3.5]{viana18} we find
\begin{align}
2 \gamma_1(\mu) 
= 2 \lim_{n \to \infty} \gamma_1(\mu'_n) 
\leq  \lim_{n \to \infty}   \max_{X \in \X'_{\C^{d \times d}}} \E_{\mu'_n} \log \tr\, X L_1 L_1^\dagger \, . \label{eq_fistPartt}
\end{align}
H\"older's inequality implies that for all $X \in \X'$
\begin{align}
|\E_{\mu} \log \tr\, X L_1 L_1^\dagger - \E_{\mu'_n} \log \tr\, X L_1 L_1^\dagger| \leq \norm{\mu - \mu'_n}_1 \norm{\log \tr\, X L_1 L_1^\dagger }_{\infty} \, .
\end{align}
Since the matrix $L_1$ is invertible $\|\log \tr\, X L_1 L_1^\dagger \|_{\infty} = \kappa  < \infty$. Together with~\eqref{eq_fistPartt} this gives
\begin{align}
2 \gamma_1(\mu)
\leq  \lim_{n \to \infty} \max_{X \in \X'_{\C^{d\times d}}}  \E_{\mu} \log \tr\, X L_1 L_1^\dagger  + \kappa \norm{\mu'_n - \mu}_1
=  \max_{X \in \X'_{\C^{d\times d}}} \E_{\mu} \log \tr\, X L_1 L_1^\dagger \, .
\end{align}

It remains to prove the lower bound of $\gamma_d$ stated in Corollary~\ref{cor2_main}.
We note that $\X'_{\C^{d \times d}}$ is the boundary of $\X_{\C^{d \times d}}$. The concavity of the logarithm ensures the minimum is attained at the boundary which thus proves the assertion.
\qed
%%%%

\subsection{Comparison of new bounds with existing results} \label{sec_comparison}
Since a direct calculation of Lyapunov exponents is known to be notoriously difficult, it is natural trying to derive good upper and lower bounds that can be evaluated efficiently. In Section~\ref{sec_intro} we discussed some existing results such as Furstenberg's celebrated lower bound on $\gamma_1$ for random matrices on the special linear group~\cite{furstenberg1971}. 

In this work (see Corollaries~\ref{cor_main} and~\ref{cor2_main}) we presented (i) an upper bound on $\gamma_1$ and (ii) a lower bound on $\gamma_d$. To the best of our knowledge the bound (ii) is entirely novel and we are not aware of a comparable existing bound in the literature. The bound (i) can be seen as a considerable refinement of a known result~\cite[Equation~(23)]{protasov13}. The novel bound has two important advantages
\begin{enumerate}
\item The upper bound from Corollary~\ref{cor_main} can utilize a possible semigroup structure of the random matrices. More precisely, the maximization in~\eqref{eq_main_UB} is over the set $\X_{\cS}$, whereas the existing bound~\cite[Equation~(23)]{protasov13} cannot reflect this structure and hence the maximization has to be done over the set $\X_{\C^{d \times d}}$. We note that $\X_{\cS}$ can be considerably smaller than $\X_{\C^{d \times d}}$ which makes the upper bound~\eqref{eq_main_UB} substantially better than~\cite[Equation~(23)]{protasov13}. This is illustrated by Example~\ref{ex_noncommutative1}. 
\item The upper bound from Corollary~\ref{cor2_main} is an improvement over~\cite[Equation~(23)]{protasov13} as the maximization is taken over the set $\X'_{\C^{d \times d}}$ instead of $\X_{\C^{d \times d}}$. Because $\X'_{\C^{d \times d}}$ is smaller than $\X_{\C^{d \times d}}$, as it requires the positive definite unit trace matrices to have rank one, the new bound can be substantially better compared to~\cite[Equation~(23)]{protasov13}. This is illustrated by Example~\ref{ex_rank1}.
\end{enumerate}

%%%

%%%%%%%%%%%%%%%%%%%%%%%%%%%%%%%
\section{Examples} \label{sec_examples}
In this section we discuss some examples and show how the bounds from Theorem~\ref{thm_main} and Corollaries~\ref{cor_main}~and~\ref{cor2_main} perform in practice. We start with a precise analysis of the commutative case.
%%%%
\subsection{The commutative case} \label{sec_commutative}
Let $\cD(d,\C) \leq \GL(d,\C)$ denote the group of diagonal invertible matrices over $\C^{d \times d}$.
For any sequence $(L_i)_{i \in \N}$ of i.i.d.~random matrices on $\cD(d,\C)$ the maximal and minimal Lyapunov exponent are given by
\begin{align} \label{eq_classical}
2\gamma_{1} =  \lambda_{\max}( \E \, \log L_1 L_1^\dagger) \qquad \text{and} \qquad   2\gamma_{d} = \lambda_{\min}( \E \, \log L_1 L_1^\dagger)  \, ,
\end{align}
where $\lambda_{\max}(\cdot)$ and $\lambda_{\min}(\cdot)$ denote the largest and smallest eigenvalue, respectively.
This fact is formally proven in Section~\ref{sec_proofCommutative_formula} below and was already observed in~\cite{crisanti12}.
Let $\bar \cD(d,\C)$ denote the semigroup of diagonal (not necessarily invertible) matrices on $\C^{d \times d}$. Corollary~\ref{cor2_main} together with Remark~\ref{rmk_generalization} implies
\begin{align}
2 \gamma_{1} \leq   \max_{X \in \X'_{\bar \cD}}\E \log \tr\,X L_i L_i^{\dagger}  \qquad \text{and} \qquad  2 \gamma_d \geq  \min_{X \in \X'_{\bar \cD}}\E \log \tr\,X L_i L_i^{\dagger}   \, ,
\end{align}
for $ \X'_{\bar \cD} =  \{YY^\dagger / \tr\, YY^\dagger : Y \in \bar \cD(d,\C), \rank\, Y = 1\}$. As claimed in Section~\ref{sec_intro} our bounds are tight in the commutative case, i.e.,
\begin{align} \label{eq_clTight}
\lambda_{\max}( \E \, \log L_1 L_1^\dagger) =  \max_{X \in \X'_{\bar \cD}}\E \log \tr\,X L_1 L_1^{\dagger}
\end{align}
and
\begin{align} \label{eq_clTight2}
\lambda_{\min}( \E \, \log L_1 L_1^\dagger) =  \min_{X \in \X'_{\bar \cD}}\E \log \tr\,X L_1 L_1^{\dagger}\, .
\end{align}

\subsubsection{Proof of~\eqref{eq_clTight} and~\eqref{eq_clTight2}} \label{sec_ourBoundTight}
We start by proving~\eqref{eq_clTight}.
To see why this is correct we note that because $\bar \cD(d,\C)$ is the semigroup of commutative diagonalizable matrices there exists a unitary matrix $U$ that diagonalizes $X$ as well as $L L^\dagger$ for all $L \sim \mu$, i.e., $X = U \Lambda U^\dagger$ for $\Lambda = \diag(\lambda_1,\ldots,\lambda_d)$ with $\lambda_i = 1$ for some $i \in [d]$ and $\lambda_j=0$ for $j \ne i$, as well as $L L^\dagger = U \Xi U^\dagger$ for $\Xi = \diag(\xi_1, \ldots, \xi_d)$ with $\xi_i \geq 0$. We then find
\begin{align}
\max_{X \in \X'_{\bar \cD}} \E \log \tr\,X L L^{\dagger} 
= \max_{i \in [d]} \E\,  \log \xi_i 
= \lambda_{\max}(\E  \log  \Xi )
=\lambda_{\max}\big(U \E \log (\Xi)  U^\dagger \big)
=\lambda_{\max}( \E \, \log L L^\dagger) \,.
\end{align}
The same argumentation shows that 
\begin{align}
 \min_{X \in \X'_{\bar \cD}} \E \log \tr\,X L L^{\dagger} =\lambda_{\min}( \E \, \log L L^\dagger) 
\end{align}
is also correct.
\qed

\subsubsection{Proof of~\eqref{eq_classical}} \label{sec_proofCommutative_formula}
To prove~\eqref{eq_classical} we start with the following simple lemma.
\begin{lemma}\label{lem_spectralMapping}
	Let $H$ be a Hermitian matrix. Then
	\begin{align}
	\log\sigma_{\max}\big( \ee^{H}\big) = \lambda_{\max}(H) \qquad \text{and} \qquad \log\sigma_{\min}\big( \ee^{H}\big) = \lambda_{\min}(H)  \, .
	\end{align}
\end{lemma}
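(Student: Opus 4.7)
The plan is to apply the spectral theorem to $H$ and use monotonicity of the exponential. Since $H$ is Hermitian, I would first write $H = U \Lambda U^{\dagger}$ with $U$ unitary and $\Lambda = \diag(\lambda_1, \ldots, \lambda_d)$ real (with $\lambda_1 = \lambda_{\max}(H)$ and $\lambda_d = \lambda_{\min}(H)$). Then $\ee^H = U \ee^{\Lambda} U^{\dagger}$ is Hermitian and positive definite, with eigenvalues $\ee^{\lambda_1}, \ldots, \ee^{\lambda_d}$.

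Next I would invoke the fact that for a Hermitian positive semidefinite matrix $A$, the singular values coincide with the eigenvalues. This follows immediately from $\sigma_k(A)^2 = \lambda_k(A A^{\dagger}) = \lambda_k(A^2) = \lambda_k(A)^2$, using~\eqref{svd2eig} and that $A \geq 0$ implies $\lambda_k(A) \geq 0$. Applied to $\ee^H$, this gives $\sigma_k(\ee^H) = \ee^{\lambda_{\pi(k)}(H)}$ for some permutation $\pi$, and since $x \mapsto \ee^x$ is strictly increasing on $\R$, the ordering of singular values is inherited from the ordering of eigenvalues of $H$. In particular $\sigma_{\max}(\ee^H) = \ee^{\lambda_{\max}(H)}$ and $\sigma_{\min}(\ee^H) = \ee^{\lambda_{\min}(H)}$.

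Finally, taking logarithms on both sides yields the two claimed identities. There is no real obstacle here: the entire argument is just the spectral theorem plus monotonicity of $\ee^x$, and the content of the lemma is essentially a restatement of the spectral mapping theorem for the function $t \mapsto \ee^t$ restricted to the largest and smallest eigenvalues of a Hermitian matrix.
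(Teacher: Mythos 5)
Your proof is correct and follows essentially the same route as the paper: diagonalize $H$ via the spectral theorem and reduce the singular values of $\ee^{H}$ to the eigenvalues $\ee^{\lambda_k(H)}$ (the paper does this through unitary invariance of $\sigma_{\max}$ and $\sigma_{\min}$, you through the observation that singular values of a positive semidefinite matrix coincide with its eigenvalues — a negligible difference).
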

\begin{proof}
Every Hermitian matrix $H$ can be diagonalized, i.e., it can be written as $H=U \Lambda U^\dagger$, where $U$ is unitary and $\Lambda$ is a real diagonal matrix containing the eigenvalues of $H$. We thus find
	\begin{align}
	\log \sigma_{\max}\big(\ee^{H}\big) 
	= \log \sigma_{\max} \big(\ee^{U \Lambda U^{\dagger}}\big)
	= \log \sigma_{\max} \big(U \ee^{\Lambda} U^{\dagger}\big)
	= \log \sigma_{\max} \big(\ee^{\Lambda}\big)
	= \lambda_{\max}(H) \, .
	\end{align}
Analogously we have
	\begin{align}
	\log \sigma_{\min}(\ee^{H}) 
	= \log \sigma_{\min}(\ee^{U \Lambda U^{\dagger}})
	= \log \sigma_{\min}(U \ee^{\Lambda} U^{\dagger})
	= \log \sigma_{\min}(\ee^{\Lambda})
	= \lambda_{\min}(H) \, .
	\end{align}
\end{proof}
We prove the assertion for the maximal Lyapunov exponent.
Since $(L_{i})_{i \in\N}$ are distributed on $\cD(d,\C)$ we find 
	\begin{align}
	2\gamma_{1} 
	= \lim_{n \to \infty}  \frac{1}{n} \E \log  \sigma_{\max}\Big(\prod_{i=1}^nL_i\Big)^2
	&= \lim_{n \to \infty} \frac{1}{n} \E \log \sigma_{\max}\Big( \Big(\prod_{i=1}^nL_i\Big)\Big(\prod_{i=1}^nL_i\Big)^\dagger\Big) \\
	&= \lim_{n \to \infty} \frac{1}{n} \E \log \sigma_{\max}\Big(\prod_{i=1}^nL_i L_i^\dagger\Big)\\
		&= \lim_{n \to \infty} \frac{1}{n} \log \sigma_{\max}\Big(\prod_{i=1}^nL_i L_i^\dagger\Big)\quad \PP-\text{a.s.} \, ,
	\end{align}
	where the final step uses the Kesten-Furstenberg result.
	Lemma~\ref{lem_spectralMapping} and the fact that $(L_{i})_{i \in\N}$ are distributed on $\cD(d,\C)$ gives $\PP$-a.s.
\begin{align}
	2\gamma_{1} 
	=\lim_{n \to \infty}\frac{1}{n}  \log \sigma_{\max}\Big(\! \exp \Big(\sum_{i=1}^n \log L_i L_i^\dagger \Big)\! \Big)
	&= \lim_{n \to \infty}\frac{1}{n}  \lambda_{\max}\Big(\sum_{i=1}^n \log L_i L_i^\dagger\Big)\\
	&=\lambda_{\max}\Big( \lim_{n \to \infty} \frac{1}{n} \sum_{i=1}^n \log L_i L_i^\dagger\Big)\\
	&=\lambda_{\max}( \E\, L_1 L_1^\dagger) \, ,
\end{align}
where the penultimate step uses the continuity of the largest eigenvalue. The final step follows from the law of large numbers~\cite{durrett_book}.
The statement for the minimal Lyapunov exponent follows by the same line of arguments
\qed

%%%
\subsection{The special linear group} \label{sec_SL}
In this section we discuss one example that is of particular relevance in physics. The Lyapunov exponent can be utilized as a mathematical tool to understand properties of certain operators which explains the behavior of certain systems. The readers that are not familiar with this subject may directly jump to Example~\ref{ex_transferMatrix} keeping in mind that Lyapunov exponents for random matrices with determinant one, i.e., over the group $\SL(d,\R)$, have many applications in physics. For the more experienced reader we would like to give some further context why we consider Example~\ref{ex_transferMatrix} below.

Consider a random Schr\"odinger operator
\begin{align}
H_{\omega}= - \Delta+V_{\omega} \, ,
\end{align}
where $\Delta$ is the Laplacian and $V_{\omega}$ is a random potential. One goal is to identify typical spectral properties of such operators. In the one-dimensional case it is known that the operator $H_{\omega}$ has a complete set of eigenvectors that decay exponentially in space (see e.g.~\cite{aizenman_book}). To formally prove this statement the maximal Lyapunov exponent is useful. For an energy $E \in \R_+$ we can write the one-dimensional discrete Schr\"odinger equation as a difference equation of the form
\begin{align} \label{eq_differenceEq}
u_{n+1} + u_{n-1}+ V_{\omega,n} u_n = E u_n \, .
\end{align}
The maximal Lyapunov exponent for a given energy $E$, denoted by $\gamma_1(E)$ describes the exponential growth or decay of the solution to~\eqref{eq_differenceEq}. By iteration we find
\begin{align}
\begin{pmatrix} u_{n+1} \\ u_n   \end{pmatrix} =  \Big( \prod_{i=1}^n T_i \Big) \begin{pmatrix} u_{1} \\ u_0   \end{pmatrix} \qquad \text{with} \qquad T_i= \begin{pmatrix} E-V_{\omega,i} & -1 \\ 1 & 0   \end{pmatrix} \, .
\end{align}
The sequence of random matrices $(T_i)_{i \in \N}$ defined above are called \emph{transfer matrices} and the corresponding maximal Lyapunov exponent happens to be inverse proportional to the localization length. 
In the traditional study of random Schr\"odinger operators we use a lower bound on $\gamma_1$ (e.g., via the Furstenberg theorem~\cite{furstenberg1971}) to ensure that there exists a finite localization length.
On the other hand, an upper bound on $\gamma_1$ gives a lower bound on the localization length. In other words an upper bound on $\gamma_1$ gives an ultimate limit how small the localization length can be at most which is of general interest. There is a rich literature about properties of random operators. The interested reader can find more information about this subject in~\cite{carmona_book,aizenman_book}.

\begin{example} \label{ex_transferMatrix}
Let $(L_i)_{i \in \N}$ be i.i.d.~random matrices on $\SL(2,\R)$ of the form
\begin{align}
L_i=\begin{pmatrix} \omega_i & -1 \\ 1 & 0 \end{pmatrix} \, ,
\end{align}
where $(\omega_i)_{i \in \N}$ are i.i.d.~random variables uniformly distributed over $\{-1,1\}$. 
A well-known result by Furstenberg~\cite{furstenberg1971} implies that $\gamma_1 >0$. 
Using the $\SL(2,\R)$ group structure we find
\begin{align}
\X_{\cG}=\left \lbrace \frac{1}{a + c} \begin{pmatrix} a & \sqrt{ac-1} \\ \sqrt{ac-1} & c \end{pmatrix}, a,c \geq 0, ac \geq 1 \right \rbrace
\end{align}
and hence Corollary~\ref{cor_main} gives 
\begin{align}
 \gamma_1 
&\leq \frac{1}{4} \max_{a,c \geq 0, ac \geq 1} \left \lbrace \log \tr \frac{1}{a + c} \begin{pmatrix} a & \sqrt{ac-1} \\ \sqrt{ac-1} & c \end{pmatrix} \begin{pmatrix} 2 & -1 \\ -1 & 1 \end{pmatrix} \right. \nonumber \\
& \hspace{23mm} \left.  + \log \tr \frac{1}{a + c} \begin{pmatrix} a & \sqrt{ac-1} \\ \sqrt{ac-1} & c \end{pmatrix} \begin{pmatrix} 2 & 1 \\ 1 & 1 \end{pmatrix} \right \rbrace  \\
&=\frac{1}{4}\max_{a,c \geq 0 , ac \geq 1} \log \frac{4a^2 + c^2 +4}{(a+c)^2}
=\frac{1}{4}\log 4 
\approx 0.35\, .
\end{align}
Analogously we find
\begin{align}
\gamma_2 
\geq \frac{1}{4}\min_{a,c \geq 0 , ac \geq 1} \log \frac{4a^2 + c^2 +4}{(a+c)^2} 
= \frac{1}{4} \min_{a\geq 0} \log \frac{4 a^2+4}{5a^2+4} 
= \frac{1}{4} \log \frac{4}{5} 
\approx -0.06 \, , \label{eq_locEx}
\end{align}
where the second step uses that $c=\frac{4(a^2+1)}{a}$ is the minimizer. 
We can use the structure of $\SL(2,\R)$ to further improve the upper bound for $\gamma_1$. Note that $A \in \SL(2,\R)$ implies $\sigma_1(A)=\frac{1}{\sigma_2(A)}$. Hence, using~\eqref{eq_kingman} we find $\PP-$a.s.
\begin{align}
\gamma_1 
= \lim_{n \to \infty} \frac{1}{n} \log \sigma_1\left( \prod_{i=1}^n L_i \right)
=- \lim_{n \to \infty} \frac{1}{n} \log \sigma_2\left( \prod_{i=1}^n L_i \right)
=- \gamma_2
\leq - \frac{1}{4} \log \frac{4}{5}
\approx 0.06 \, ,
\end{align}
where we used~\eqref{eq_locEx} in the final step. Thus together with Furstenberg~\cite{furstenberg1971} we find $\gamma_2 \geq -0.06$ and $\gamma_1 \in (0, 0.06]$ which is a very accurate localization of the true value of $\gamma_1$.
As a comparison, the simple bounds from Remark~\ref{rmk_simpleBounds} give $\gamma_1 \leq \frac{1}{2}\log \frac{1}{2}(3+\sqrt{5})\approx 0.48$ and $\gamma_2 \geq \frac{1}{2} \log \frac{1}{2}(3-\sqrt{5})\approx -0.48$.
\end{example}

%%%
\subsection{Other interesting examples} \label{sec_other_examples}
In this section we discuss various other examples that illustrate how to use the bounds from Corollaries~\ref{cor_main} and~\ref{cor2_main} in practice.

\begin{example}[Rank-one matrices] \label{ex_noncommutative1}
Let $(L_i)_{i \in \N}$ be i.i.d.~matrices chosen uniformly over
\begin{align}
\left \lbrace \begin{pmatrix} 1 & 0 \\ 0 & 0 \end{pmatrix} , \,  \frac{1}{2}\begin{pmatrix} 1 & 1 \\ 1 & 1 \end{pmatrix}  \right \rbrace \, .
\end{align}
For this scenario we have $\gamma_1 = - \frac{1}{4} \log 2$ because each time in the large matrix product the matrix changes, which happens with probability $\frac{1}{2}$, we pick up a factor $2^{-1/2}$. Hence in the operator norm of the definition of the Lyapunov exponent we get a factor $2^{-n/4}$. To apply Corollary~\ref{cor_main} we first note that the distribution for the random matrices in this example is over a semigroup 
\begin{align}
\cS=\left \lbrace 2^{-k} \begin{pmatrix} 1 & 0 \\ 0 & 0 \end{pmatrix} ,  \frac{1}{2}\begin{pmatrix} 1 & 1 \\ 1 & 1 \end{pmatrix} ,   2^{-k} \begin{pmatrix} 1 & 1 \\ 0 & 0 \end{pmatrix}  \right \rbrace \,  \quad \text{for} \quad k \in \N
\end{align}
and hence 
\begin{align}
\X_{\cS}=\left \lbrace  \begin{pmatrix} 1 & 0 \\ 0 & 0 \end{pmatrix} ,  \frac{1}{2}\begin{pmatrix} 1 & 1 \\ 1 & 1 \end{pmatrix}  \right \rbrace \, .
\end{align}
Corollary~\ref{cor_main} then gives  $\gamma_1 \leq - \frac{1}{4} \log 2 \approx -0.1733$ which is tight for this example.
The simple bound from Remark~\ref{rmk_simpleBounds} gives $\gamma_1 \leq 0$. The known bound from~\cite[Equation~(23)]{protasov13} gives $\gamma_1\leq -0.0792$.
\end{example}

\begin{example}[Group structure] \label{ex_noncommutative2}
Let $(L_i)_{i \in \N}$ be i.i.d.~matrices chosen uniformly over
\begin{align}
\left \lbrace \begin{pmatrix} \sqrt{2} & 0 \\ 0 & \frac{1}{\sqrt{2}} \end{pmatrix} , \,  \begin{pmatrix} 0 & \frac{1}{\sqrt{3}} \\ -\sqrt{3} & 0 \end{pmatrix}  \right \rbrace \, .
\end{align}
For this scenario it has been shown that $\gamma_1 = 0$~\cite[Section II.6]{bougerol_book}.
Before applying the bounds from Corollary~\ref{cor_main} we note that the distribution for the random matrices in this example is over a group
\begin{align}
\cG=\left \lbrace  \begin{pmatrix} a & 0 \\ 0 & a^{-1} \end{pmatrix}, \begin{pmatrix} 0 & b^{-1} \\ -b & 0 \end{pmatrix}, a,b \in \R\backslash\{0\}  \right \rbrace\, 
\end{align}
and hence 
\begin{align}
\X_{\cG}=\left \lbrace \frac{1}{c+c^{-1}}  \begin{pmatrix} c & 0 \\ 0 & c^{-1} \end{pmatrix} : c \in \R_+ \right \rbrace \, .
\end{align}
Hence Corollary~\ref{cor_main} gives
\begin{align}
\gamma_1 
\leq \frac{1}{4} \max_{c \in \R_+} \left \lbrace \log \frac{9+c^2}{3+3c^2}  +\log \Big( 2- \frac{3}{2(1+c^2)} \Big) \right \rbrace 
= \frac{1}{2}\log\frac{35}{24}
\approx 0.1886 \, ,
\end{align}
where the maximizer is $c^*=\sqrt{\frac{19}{29}}$. Analogously we also find $\gamma_2 \geq \frac{1}{4} \log \frac{3}{4} \approx -0.07$.
The simple bounds from Remark~\ref{rmk_simpleBounds} give $\gamma_1 \leq \frac{1}{4} \log 6 \approx 0.45$ and $\gamma_2\geq -\frac{1}{4} \log 6 \approx -0.45$.
\end{example}

\begin{example}[Corollary~\ref{cor2_main} may outperform Corollary~\ref{cor_main}] \label{ex_rank1}
Let $(L_i)_{i \in \N}$ be i.i.d.~random matrices on $\C^{2\times 2}$ of the form
\begin{align}
L_i= U_i \begin{pmatrix} \alpha & 0 \\ 0 & \beta \end{pmatrix} U_i^{\dagger} \, ,
\end{align}
for $\alpha,\beta > 0$, $\alpha \ne \beta$, and $(U_i)_{i \in \N}$ randomly chosen according to the Haar measure on the unitary group $\mathrm{U}(2,\C)$.  Corollary~\ref{cor_main} (which in this case coincides with the bound from~\cite[Equation~(23)]{protasov13}) gives
\begin{align}
\gamma_1 
\leq \frac{1}{2} \max_{X \in \X_{\C^{d\times d}}} \E \log \tr\,X L_1 L_1^{\dagger}
=\frac{1}{2}  \E \log \frac{1}{2} \tr\, L_1 L_1^{\dagger}
=\frac{1}{2}  \log \frac{1}{2} (\alpha^2 + \beta^2) \, , \label{eq_worseUB}
\end{align}
where the second step uses that by symmetry the maximizer is $\frac{1}{2}\id_2$. Corollary~\ref{cor2_main} gives on the other hand gives
\begin{align}
\gamma_1 
\leq \frac{1}{2} \max_{X \in \X'_{\C^{d\times d}}} \E \log \tr\,X L_1 L_1^{\dagger}
= \frac{1}{2} \E \log  (1,0) L_1 L_1^{\dagger}(1,0)^\dagger \, , \label{eq_betterUB}
\end{align}
where we also used the symmetry of the Haar measure. The lower bound for $\gamma_2$ from Corollary~\ref{cor2_main} ensures that~\eqref{eq_betterUB} is actually tight since
\begin{align}
\gamma_2 
\geq  \frac{1}{2} \min_{X \in \X'_{\C^{d\times d}}} \E \log \tr\,X L_1 L_1^{\dagger}
= \frac{1}{2} \E \log  (1,0) L_1 L_1^{\dagger} (1,0)^{\dagger} \, ,
\end{align}
where we again used the symmetry of the Haar measure. Hence we can conclude that 
\begin{align}
\gamma_2 = \gamma_1 =  \frac{1}{2} \E \log  (1,0) L_1 L_1^{\dagger} (1,0)^{\dagger} \, .
\end{align}
Jensen's inequality (which is strict as $\alpha \ne \beta$) assures that~\eqref{eq_betterUB} is strictly better than~\eqref{eq_worseUB} since
\begin{align}
 \frac{1}{2} \E \log (1,0) L_1 L_1^{\dagger} (1,0)^{\dagger}
 &<  \frac{1}{2}  \log \E (1,0) L_1 L_1^{\dagger} (1,0)^{\dagger}\\
&\leq  \frac{1}{2}  \log \tr\, (\E \, U (1,0)^\dagger (1,0) U^\dagger  ) (\alpha^2 + \beta^2) \\
&=  \frac{1}{2}  \log \frac{1}{2}(\alpha^2 + \beta^2)  \, ,
\end{align}
where the penultimate step uses H\"older's inequality~\cite[Exercise IV.2.7]{bhatia_book}. The final step is true because $\E \, U (1,0)^\dagger (1,0) U^\dagger = \id_2/2$.
We note that depending on the value of $\alpha$ and $\beta$ the difference between~\eqref{eq_worseUB} and~\eqref{eq_betterUB} can be substantial. 
As an example for $\alpha=5$ and $\beta=1$ we obtain
\begin{align}
\gamma_1 \leq \frac{1}{2}  \log \frac{1}{2} (\alpha^2 + \beta^2) \approx 1.28
\qquad \text{and} \qquad
\gamma_1  = \gamma_2 = \frac{1}{2} \E \log  (1,0) L_1 L_1^{\dagger} (1,0)^\dagger \approx 1.18 \, .
\end{align}
The simple bounds from Remark~\ref{rmk_simpleBounds} give $\gamma_1 \leq \log \max\{\alpha,\beta\} \approx 1.61$ and $\gamma_2 \geq  \log \min\{\alpha,\beta\} =0$.
\end{example}

\begin{example}[Convex optimization solver to compute the bound] \label{ex_convexOptimization}
This example shows that in case of random matrices without any useful structure it is relevant that the upper bound from Proposition~\ref{prop_SDP} can be evaluated efficiently using convex programming.
 Let $(L_i)_{i \in \N}$ be i.i.d.~matrices chosen uniformly over
\begin{align*}
\left \lbrace  \!\!
\setlength\arraycolsep{2.2pt}
\begin{pmatrix*}[r]-5& 9& 6&-1& 5 \\ 1& 6& 5& 5& 2 \\ 6&-5& 5&-4& 1\\ 1&10&-9& 8& 2 \\  5&-4& 4&-8& 5 \end{pmatrix*}\!,\!   
\begin{pmatrix*}[r] 4&-6& 1& 2& 3 \\ 8& 7& 0& 1&-8 \\-8&-1& 4&10& 5\\ 0&-6 &-10&-7&6\\ 9&-8&5&-3&-10 \end{pmatrix*}\!,\!
\begin{pmatrix*}[r]6&-9&3&3&10    \\9&8&0&7&-10  \\ -1&2&-7&0&-6 \\  5&-10&-2&1&-1 \\ -4&10&2&-10&-5 \end{pmatrix*}\!,\!
\begin{pmatrix*}[r]3&9&-4&6&-2   \\ 0&9&4&-8&-9  \\ 5&3&3&-2&-9 \\ -8&-10&-7&6&-9\\-6&-8&-2&-1&-7 \end{pmatrix*} \!\!
 \right \rbrace \! .
\end{align*}
Proposition~\ref{prop_SDP} gives
\begin{align}
\gamma_1 
\leq  \frac{1}{2} \max_{X \in \X_{\C^{d\times d}}} \E \log \tr\,X L_1 L_1^{\dagger}
\approx 2.86 \, ,
\end{align}
where we used a convex optimization solver to compute the maximization. Duality theory of convex programming ensures the result is correct up to an error of order $O(10^{-9})$.\footnote{We used CVX on Matlab to solve the convex optimization problem. On a MacBook with 2.3 GHz Intel Core i7 and 8 GB memory we can run examples up to $d=500$ in a reasonable time (i.e., in a few minutes).} As a comparison, the simple bound from Remark~\ref{rmk_simpleBounds} gives $\gamma_1 \leq 3.05$.
\end{example}

%%%%%%%%%%%%%%%%%%%%%%%%%%%%%%%
\section{Connections between Lyapunov exponents and entropy} \label{sec_connectionEntropy}
In this section we discuss two connections between Lyapunov exponents and entropy. In particular we show that our bounds on $\gamma_1$ and $\gamma_d$ imply new relations for entropic quantities.

\subsection{Towards an entropy accumulation theorem for relative entropies} \label{sec_EAT}
Before we could prove the bounds on the maximal and minimal Lyapunov exponent we had to gain insight into what form they could have. A recent result from quantum information theory called \emph{entropy accumulation theorem}~\cite{DFR16,DF18} turned out to provide evidence on the structure of the bounds. On an informal level entropy accumulation ensures that the operationally relevant quantities of a multiparty system (called smooth min-and max-entropies~\cite{koenig09}) can be bounded by the sum of the von Neumann entropies of its individual parts viewed on a worst case scenario. Finally in the process of simplifying our argument we discovered the elementary proof for the main result that is presented in this manuscript. 

To make this connection more precise we show how a variant of the entropy accumulation theorem can be obtained as direct consequence of Theorem~\ref{thm_main}. More precisely, we show that
\begin{align} \label{eq_REAT}
 \frac{1}{n} D\big(\id_{A_1 \ldots A_n} \ \| \tr_R (\cM_n \circ \ldots \circ \cM_1)(\id_R) \big)
 \geq \frac{1}{n} \sum_{i=1}^n \min_{X \in \X_{\C^{d \times d}}}D\big(\id_{A_i} \| \tr_R \, \cM_i(X)\big) - o(1)\, ,
\end{align}
where $(\cM_i)_{i \in \N}$ are completely positive maps from $R \to A_i \otimes R$ defined by
\begin{align} \label{eq_maps}
\cM_i : X_R \mapsto  \int  \proj{\omega_i}_{A_i} \otimes L_i(\omega_i) X_R L_i(\omega_i)^\dagger  \mu_{L_i(\omega_i)} (\di L_i (\omega_i)) \, .
\end{align}
Inequality~\eqref{eq_REAT} may be viewed as a variant of the entropy accumulation theorem as it ensures that the relative entropy of a large system, represented by the left-hand side of~\eqref{eq_REAT} can be bounded as a sum of relative entropies of the individual systems.

In the following we show how~\eqref{eq_REAT} follows as a direct consequence from Theorem~\ref{thm_main}.
Consider a sequence $(L_i)_{i \in \N}$ of independent random matrices on $\C^d \simeq R$ with joint distribution $\mu_{L^n}$. By definition of the expectation value we have
\begin{align}
 \E \log \tr \, L_n \ldots L_1 L_1^\dagger \ldots L_n^\dagger
= \int  \log  \tr \, L_{n}(\omega_n) \ldots L_{1}(\omega_1) L_{1}(\omega_1)^\dagger \ldots L_{n}(\omega_n)^\dagger  \mu_{L^n(\omega^n)} (\di L^n (\omega^n)) \, .
\end{align}
Furthermore, by definition of the maps $(\cM_i)_{i \in \N}$ given in~\eqref{eq_maps} we find
that gives
\begin{align}
&(\cM_n \circ \ldots \circ \cM_1)(\id_R)_{A_1 \ldots A_n} \nonumber\\
&\hspace{10mm}=\int  \proj{\omega^n}_{A_1 \ldots A_n} \otimes \tr \, L_{n}(\omega_n) \ldots L_{1}(\omega_1) L_{1}(\omega_1)^\dagger \ldots L_{n}(\omega_n)^\dagger  \mu_{L^n(\omega^n)} (\di L^n (\omega^n)) \\
&\hspace{10mm}=:\tau_{A_1 \ldots A_n} \, .
\end{align}
Recalling that the quantum relative entropy is given by $D(\rho \| \sigma):=\tr\, \rho \log \rho - \tr\, \rho \log \sigma$ leads to
\begin{align}
 \frac{1}{n} D(\id_{A_1 \ldots A_n} \ \| \tau_{A_1 \ldots A_n})
=\! -  \frac{1}{n} \tr \log \tau_{A_1 \ldots A_n}
=\!- \frac{1}{n} \E \log \tr \, L_n \ldots L_1 L_1^\dagger \ldots L_n^\dagger
=\! - 2 \gamma_{1,n}- o(1)\, . \label{eq_reat1}
\end{align}
Furthermore, we find for all $i \in \N$
\begin{align}
\min_{X \in \X_{\C^{d \times d}}}D\big(\id_{A_i} \| \tr_R \, \cM_i(X)\big)
&=\!\min_{X \in \X_{\C^{d \times d}}}\!\! - \tr\, \log \int \tr(L_i(\omega_i) X L_i(\omega_i)^\dagger) \proj{\omega_i}_{A_i} \mu_{L_i(\omega_i)} (\di L_i (\omega_i)) \\
&= \min_{X \in \X_{\C^{d \times d}}} - \E \log \tr X L_i L_i^\dagger \\
&= - \max_{X \in \X_{\C^{d \times d}}}  \E \log \tr X L_i L_i^\dagger \, . \label{eq_reat2}
\end{align}
Theorem~\ref{thm_main} in combination with~\eqref{eq_reat1} and~\eqref{eq_reat2} thus implies~\eqref{eq_REAT}. 

This connection raises further questions such as the existence of a more general entropy accumulation result for relative entropies (where the first argument is not necessarily the identity operator) that contains the bounds on the Lyapunov exponents and the original entropy accumulation theorem for conditional entropies as a special case. Another step towards a fully general entropy accumulation theorem for relative entropies has been obtained recently in~\cite{sutter20} where a novel chain rule for the relative entropy is proven.

%%%%%%%%%%
\subsection{Entropy rate of hidden Markov processes}\label{sec_applications}
In this section it is shown that the entropy rate of hidden Markov processes is directly related to the maximal Lyapunov exponent. Hence the results above can be used to obtain upper and lower bounds for the entropy rate.

Let $(X_i)_{i\in \N}$ be a stochastic stationary process. The entropy rate of this process is defined as
\begin{align}
\bar H(X):= \lim_{n \to \infty} \frac{1}{n} H(X^n)  
=-\lim_{n \to \infty} \frac{1}{n} \E \log P_{X^n}(X^n) \, ,
\end{align}
where the limit exists as the process is stationary. The interested reader may consult~\cite{cover} for more information about this quantity. The celebrated  Shannon-McMillan-Breiman (see, e.g.~\cite{cover_shannon}) theorem asserts that if $(X_n)_{\in \N}$ is also ergodic we have
\begin{align}
\bar H(X) = -\lim_{n \to \infty} \frac{1}{n} \log P_{X^n}(X^n) \quad \PP-\text{a.s.} \, .
\end{align}

Let $(X_n)_{n \in \N}$ be a stationary and ergodic Markov process taking values in a finite set $\cX$ described by a transition matrix $M \in [0,1]^{|\cX|\times |\cX|}$ such that $M_{x,x'}=\PP(X_{i+1}=x' | X_i =x)$. Let $(Y_n)_{n \in \N}$ denote a noisy version of the Markov process where the noise is described by a discrete memoryless channel.\footnote{A channel is said to be memoryless if the probability distribution of the output depends only on the input at that time and is conditionally independent of previous channel inputs or outputs.} A discrete channel consists of a discrete input alphabet $\cX$, a discrete output alphabet $\cY$, and a probability transition matrix $W \in [0,1]^{|\cX| \times |\cY|}$ such that  $W_{x,y}=\PP(Y_i=y|X_i=x)$.

The process $(Y_n)_{n \in \N}$ is a \emph{hidden Markov process}. These processes are well-studied and arise naturally in many areas of science ranging from statistics via communication and information theory to machine learning, just to name a few. The interested reader can find an extensive discussion about hidden Markov processes and their applications in~\cite{merhav02} and references therein.

Computing the entropy rate of a hidden Markov process is a complicated task and in general an explicit form is unknown. Interestingly the entropy rate of a hidden Markov process is closely related to the maximal Lyapunov exponent. A standard recursion~\cite{merhav02,weissman11} yields
\begin{align}
\PP(Y^n=y^n) = \mu^{\mathrm{T}} \left( \prod_{i=1}^n(M\odot W_{\cdot,y_i}^{\mathrm{T}}) \right) \mathds{1} \, ,
\end{align}
where $\mu$ is the stationary distribution of the Markov process $(X_n)_{n \in \N}$ (represented as a column vector), $M\odot W_{\cdot,y_i}^{\mathrm{T}}$ denotes the denotes the $|\cX|\times |\cX|$ matrix whose $x$th row is given by the componentwise multiplication of the $x$th row of $M$ by the row vector whose $x'$th component is $W_{x',y_i}$, and $\mathds{1}$ is the all-1 column vector. 
Since for a matrix $A$ with nonnegative entries $\mu^{\mathrm{T}} A \mathds{1}$ is a norm of $A$ and since all matrix norms are equivalent we have
\begin{align}
\bar H(Y) 
=-\lim_{n \to \infty}  \frac{1}{n} \E \log \norm{\prod_{i=1}^n (M\odot W_{\cdot,Y_i}^{\mathrm{T}})  } 
=-\gamma_1 \, .
\end{align}
In words, the entropy rate of the hidden Markov process $(Y_n)_{n \in \N}$ is equal to the negative maximal Lyapunov exponent of the random matrices $(M\odot W_{\cdot,Y_n}^{\mathrm{T}})_{nÊ\in \N}$. This connection was also observed and discussed in~\cite{goldsmith06,jacquet08}.

%%%%%%%%%%%%%%%%%%%%%%%%%%%% 

\paragraph{Acknowledgments.} We thank J\"urg Fr\"ohlich and Marius Lemm for discussions on random matrices and localization of random operators. We also thank Gian Michele Graf for making us aware of related literature, in particular~\cite[Equation~(23)]{protasov13}.
We further thank Lennart Baumg\"artner, Raban Iten, and Tobias Sutter for inspiring discussions about Lyapunov exponents~\cite{lennart17} and entropy rates of hidden Markov processes.
This work was funded by the Swiss National Science Foundation via project No.~200020\_165843 and via the National Centre of Competence in Research QSIT, by the Air Force Office of Scientific Research (AFOSR) via grant FA9550-19-1-0202, as well as by the French ANR project ANR-18-CE47-0011 (ACOM).

\appendix
\section*{Appendix}
%%%%%%%%%%%%%%%%%
%%%%%%%%%%%%%%%%%
\section{Lyapunov spectrum exists for stationary random matrices} \label{app_limitExists}
In this section we prove that for any sequence $(L_n)_{n \in \N}$ of stationary random matrices on $\C^{d \times d}$ such that $\E \log \sigma_{\max}(L_1) < \infty$ the limit in
\begin{align}  \label{eq_tsLimit}
\gamma_{k} := \lim_{n \to \infty} \gamma_{k,n} = \lim_{n \to \infty} \frac{1}{n} \E \log \sigma_k \Big(\prod_{i=1}^n L_i\Big)
\end{align}
exists for $1 \leq k \leq d$. We note that this fact is known and is explained here for completeness.

To prove this we need to introduce the \emph{antisymmetric tensor product}. For $j \in \N$ and a Hilbert space $\cH$ let $\cH^{\wedge j}$ denote the antisymmetric subspace of $\cH^{\otimes j}$. The $j$-th \emph{antisymmetric tensor power} $\wedge^j: \mathrm{L}(\cH) \to \mathrm{L}(\cH^{\wedge j})$ maps every matrix $L \in \cH$ to the restriction of $L^{\otimes j}$ to the antisymmetric subspace $\cH^{\wedge j}$ of $\cH^{\otimes i}$, where $\mathrm{L}(\cH)$ denotes the set of matrices on $\cH$. This mapping is well studied and oftentimes serves as a useful tool in proofs. Among other interesting properties~\cite[Section I.5 and p.~18]{bhatia_book} it satisfies for any $L_1, L_2 \in \C^{d\times d}$ and any $j \in \N$
\begin{align} \label{eq_wedge1}
\wedge^j(L_1 L_2)= (\wedge^{j} L_1) (\wedge^j L_2)
\end{align}
and
\begin{align} \label{eq_wedge2}
\sigma_{1}(\wedge^j L) = \prod_{i=1}^j \sigma_{i}(L) \, .
\end{align}

By property~\eqref{eq_wedge1} above and the submultiplicativity of the largest singular value~\cite{bhatia_book} we have for $n,m \in \N$
\begin{align}
a_{j,n+m}:= \E \log \sigma_{1}\Big( \wedge^{j} \prod_{i=1}^{n+m} L_i \Big)
&=\E \log \sigma_{1}\Big( \Big(\wedge^{j} \prod_{i=1}^{n} L_i\Big) \Big(\wedge^{j} \prod_{i=n+1}^{n+m} L_i\Big) \Big) \\
&\leq \E \log\sigma_{1} \Big( \wedge^j \prod_{i=1}^{n} L_i\Big) + \E \log\sigma_1 \Big( \wedge^j \prod_{i=n+1}^{m+n} L_i\Big) \\
&= \E \log\sigma_1 \Big( \wedge^j  \prod_{i=1}^{n} L_i\Big) + \E \log\sigma_1 \Big( \wedge^j   \prod_{i=1}^{m} L_i\Big)\\
&=a_{1,n} + a_{1,m} \, , 
\end{align}
where the penultimate step uses the assumption that $(L_n)_{n\in \N}$ are stationary. We thus see that $(a_{j,n})_{n\in \N}$ is a subadditive sequence and hence according to Fekete's subadditivity lemma~\cite{fekete23} the limit 
\begin{align}
\lim_{n \to \infty} \frac{1}{n} a_{j,n} = \inf_{n \in \N} \frac{1}{n}a_{j,n} = \xi_{j}
\end{align}
exists.

For $j=1$ we find with the help of~\eqref{eq_wedge2}
\begin{align}
\xi_1 = \lim_{n \to \infty} \frac{1}{n} a_{1,n}= \lim_{n \to \infty} \frac{1}{n} \E \log \sigma_1 \Big(\prod_{i=1}^n L_i\Big) = \lim_{n \to \infty} \gamma_{1,n} \, ,
\end{align}
i.e., the asserted limit in~\eqref{eq_tsLimit} exists for $k=1$.
For $j=2$ again using~\eqref{eq_wedge2} gives
\begin{align}
\xi_2 
= \lim_{n \to \infty} \frac{1}{n} a_{2,n} 
&= \lim_{n \to \infty} \frac{1}{n} \E \log \sigma_1 \Big(\prod_{i=1}^n L_i\Big) + \lim_{n \to \infty} \frac{1}{n} \E \log \sigma_2 \Big(\prod_{i=1}^n L_i\Big) \\ 
&= \xi_1 + \lim_{n \to \infty} \frac{1}{n} \E \log \sigma_2 \Big(\prod_{i=1}^n L_i\Big) \, ,
\end{align}
which shows that the asserted limit in~\eqref{eq_tsLimit} exists for $k=2$. We can now continue this argument to show that the limit in~\eqref{eq_tsLimit} exists for all $k \in [d]$. \qed

%%%%%%%%%%%%%%%%%%%%%%%%%%%%%%
\section{Proof of Theorem~\ref{thm_viana}} \label{app_details}
Because the reference~\cite{viana_prep} is in preparation we present a proof for the assertion of Theorem~\ref{thm_viana} in Appendix~\ref{app_details} for completeness.

Let $C$ be a topological vector space and $C^*$ its continuous dual. The weak* topology on $C^*$ is defined to be the $C$-topology on $C^*$, i.e., the coarsest topology (the topology with the fewest open sets) under which every element $c \in C$ corresponds to a continuous map on $C^*$. 
We now take $C$ to be the space of continuous functions with the supremum norm on $\GL(d,\C)$. By the Riesz-Markov theorem its dual space $C^*$ can be identified with the space of all complex regular Borel measures of bounded variation on $\GL(d,\C)$ and the space of compactly supported probability measures on $\GL(d,\C)$, denoted by $\bar G(d,\C)$, can be identified with a subspace thereof. The weak* topology on $C^*$ can then be restricted to a topology on $\bar G(d,\C)$.
Let $\bar G(d,\C)$ be equipped with the weakest topology $\cT$ such that
\begin{enumerate}
\item $\cT$ is stronger than the weak* topology restricted to $\bar G(d,\C)$  \label{it_topo1}
\item $\cT$ is stronger than the pull-back of the Hausdorff topology by $\mu \mapsto \supp \, \mu$. \label{it_topo2}
\end{enumerate}
More information about these assumptions can be found in~\cite{viana18,folland2013}.

%In other words, two compactly supported probability measures are close relative to $\cT$ if and only if they are weak*-close and their supports are Hausdorff-close.

It has been shown~\cite[Theorem~3.5]{viana18} that $\mu \mapsto \gamma_1(\mu)$ is continuous on $\bar G(d,\C)$, where $\gamma_1$ denotes the maximal Lyapunov exponent.\footnote{We note that for $2\times 2$ matrices the continuity of the maximal Lyapunov exponent has been proven in~\cite{young_1986}.} This implies that for probability measures supported on finite sets, i.e.,
\begin{align} \label{eq_simple_meas}
\mu = \sum_{i=1}^n p_i \delta_{L_i}
\end{align}
the maximal Lyapunov exponent varies continuously with the probabilities $p_i>0$ and the matrices $L_i \in \GL(d,\C)$ at every point.
To see this it suffices to show that the function $(p,L) \mapsto \sum_{i=1}^n p_i \delta_{L_i}$ is continuous for $p=(p_1,\ldots,p_n)$ and $L=(L_1,\ldots,L_n)$, because $\mu \mapsto \gamma_1(\mu)$ is known to be continuous.  
Since $\cT$ is defined as the weakest topology satisfying points~\ref{it_topo1}.~and~\ref{it_topo2}., any open set of $\bar{G}(d,\C)$ is a union of an intersection between an open set according to the weak* topology and an open set according to the pull-back of the Hausdorff topology under the $\supp$ map.\footnote{To see why considering a union of finite intersections is sufficient, let $T'$ and $T''$ be two topologies on the same set. The weakest topology $T$ on that set that is stronger than $T'$ and $T''$ may be defined as the topology consisting of all open sets of the form 
\begin{align} \label{eq_topo}
Z = \cup_{i \in \cS} X_i \cap Y_i  \, ,
\end{align}
where $(X_i)$ and $(Y_i)$ are families of open sets of $T'$ and $T''$, respectively, parameterized by a (not necessarily countable) set $\cS$. 
Any (finite) intersection of sets $Z$ of the form~\eqref{eq_topo} is again of the form~\eqref{eq_topo}. The same is obviously true for (not necessarily finite) unions. So the set of all sets $Z$ of the form~\eqref{eq_topo} is a valid topology. Furthermore, any $X \in T'$ and any $Y \in T''$ can also be expressed in the form~\eqref{eq_topo}, as one can just take one $X_i$ to be equal to $X$ and $Y_i$ equal to the entire set, or analogously for $Y$. Hence $T$ is indeed at least as strong as $T'$ and $T''$. } 
It is therefore sufficient to prove that the pre-image of any such set under the map $(p, L) \mapsto \sum_{i=1}^n p_i \delta_{L_i}$ is open. This is done in two steps.

First, let $X$ be a subset of $\bar{G}(d,\C)$ that is open according to the weak* topology. This means that $X$ is a union of finite intersections of pre-images of open subsets of $\mathbb{R}$ under any map $\mu \mapsto \int \phi \mu$, where $\phi$ is a ``test function", i.e., a function in $C$. But this means that the pre-image of $X$ under~\eqref{eq_simple_meas} is a union of intersections of pre-images of open subsets of $\mathbb{R}$ under the concatenated map $(p, L) \mapsto \int [\phi \sum_i p_i \delta_{L_i}] = \sum_i p_i \phi(L_i)$. Since $\phi$ is continuous, this concatenated map is continuous, and hence its pre-images are indeed open (for any $p_i\geq 0$).

Second, let $X$ be a subset of $\bar{G}(d,\C)$ that is open according to the the pull-back of the Hausdorff topology under the supp map. By definition, this means that $X$ is itself the pre-image of an open set (on the set of subsets of $\GL(d,\C)$) under the supp map. But this means that the pre-image of $X$ is the pre-image of an open set (on the set of subsets of $\GL(d,\C)$) under the concatenated map $(p, L) \mapsto \textrm{supp}(\sum_i p_i \delta_{L_j}) = \{L_i : i \textnormal{ such that } p_i>0\}$. For $p_i>0$ this map is continuous according to the Hausdorff topology, and hence pre-images of open sets are open.

\bibliographystyle{arxiv_no_month}
\bibliography{bibliofile}
      
\end{document}